\author{Zeta Avarikioti\inst{1} \and
Paweł Kędzior\inst{2} \and
Tomasz Lizurej\inst{3} \and
Tomasz Michalak\inst{4}}
\institute{Department of Informatics, TU Wien, Austria \and
University of Warsaw, Poland \and
NASK and University of Warsaw, Poland \and
IDEAS NCBR and University of Warsaw, Poland}
\authorrunning{Avarikioti et al.}
\begin{document}
\newcommand{\prob}{\mathbb{P}}

\newcommand{\adv}{{V_{\mathcal{A}}}}
\newcommand{\cg}{{G_C}}

\newcommand{\game}{\Gamma_F}

\newcommand{\state}{\mathcal{S}}
\newcommand{\states}{\mathbb{S}}

\newcommand{\block}{\mathsf{block}}
\newcommand{\base}{B}
\newcommand{\winner}{W}

\newcommand{\reward}{\mathsf{reward}}

\newcommand{\players}{N}
\newcommand{\player}{N}

\newcommand{\party}{\mathcal{P}}

\newcommand{\actions}{\Omega}

\newcommand{\expectedutil}{\boldsymbol{\mathbb{E}}}

 \newcommand{\notenamem}[2]{{\textcolor{red}{\footnotesize{\bf (#1:} {#2}{\bf ) }}}}
 \newcommand{\za}[1]{{\notenamem{Zeta}{#1}}}

 \newcommand{\tl}[1]{{\notenamem{Tomasz}{#1}}}

\newcommand{\decision}{\mathsf{decision}}
\newcommand{\abandon}{\mathsf{abandon}}
\newcommand{\continue}{\mathsf{continue}}

\newcommand{\attack}{\textit{Bribe \& Fork}\xspace}

\newtheorem{conj}{Conjecture}
\title{\attack: 
\\Cheap PCN Bribing Attacks via Forking Threat
}
\titlerunning{\attack}
%

%
\maketitle

\begin{abstract}
In this work, we reexamine the vulnerability of Payment Channel Networks (PCNs) to bribing attacks, where an adversary incentivizes blockchain miners to deliberately ignore a specific transaction to undermine the punishment mechanism of PCNs. 
While previous studies have posited a prohibitive cost for such attacks,  we show that this cost can be dramatically reduced (to approximately \$125), thereby increasing the likelihood of these attacks. 
To this end, we introduce \attack, a modified bribing attack that leverages the threat of a so-called feather fork which we analyze with a novel formal model for the mining game with forking. 
We empirically analyze historical data of some real-world blockchain implementations to evaluate the scale of this cost reduction. 
Our findings shed more light on the potential vulnerability of PCNs and highlight the need for robust solutions.
\keywords{Blockchain, Payment Channels Networks, Timelock Bribing, Feather Forking}
\end{abstract}
\section{Introduction}
The financial world was transformed by blockchains such as Bitcoin~\cite{nakamoto2008bitcoin} and Ethereum~\cite{wood2014ethereum}. 
While blockchains offer a number of benefits, their scalability remains a significant challenge when compared to traditional centralized payment systems~\cite{croman2016scaling}. One promising solution to this issue is the so-called \emph{payment channel networks (PCNs)} that move most of the transaction workload off-chain~\cite{layer2}.




Several PCN proposals~\cite{aumayr2020generalized,aumayr2021bitcoin,aumayr2022sleepy,avarikioti2021brick,avarikioti2020cerberus,decker2019eltoo,decker2015fast,dziembowski2019perun,dziembowski2018general,jourenko2020lightweight,poon2016bitcoin,spilman2013anti} have been laid forward so far, each design offering some unique combination of features.
Nonetheless, the core idea behind payment channels remains the same, that is to facilitate off-chain transactions among parties connected, either directly or indirectly via a path, on a network operating on top of the blockchain layer, the PCN.

To participate in a PCN, two parties can lock funds in a joint account on-chain, thereby opening a payment channel.  Subsequently, the parties can transact off-chain by simply updating (and signing) the distribution of their funds. When either party wants to settle the account, or in other words close the payment channel with its counterparty, they can publish the last agreed distribution of the channel's funds. However, each update constitutes a valid closure of the channel from the perspective of the blockchain miners. 
As a result,  a malicious party may publish an outdated update to close the channel holding more than it currently owns. 
To secure the funds against such attacks, payment channels enforce a \emph{dispute period}. During this period, the funds remain locked to allow the counterparty to punish any malicious behavior, and if so, claim all the funds locked in the channel. 

Hence, the security of PCNs, like the most widely deployed Bitcoin Lighting Network~\cite{poon2016bitcoin},  crucially relies on financial incentives.
Specifically, during the dispute period, the punishment mechanism should enforce that a malicious party is always penalized and an honest party should never lose its funds.
Unfortunately, this is not always the case, as argued by Khabbazian et al.~\cite{timelocked} and Avarikioti et al.~\cite{zeta}. For instance, Lighting channels are susceptible to the so-called \emph{timelock bribing attacks}. In such an attack, a malicious party posts an old update transaction on-chain, attempting to close its channel with more funds than it presently possesses. Concurrently, the party bribes the miners to ignore the punishment transaction of the counterparty for the duration of the dispute period. This bribe is typically offered to the miners in the form of high transaction fees.

Naturally, the success of this attack depends on the value of the bribe. Avarikioti et al.~\cite{zeta} showed that 
a bribing attack will be successful if the bribe is no smaller than: 
$\dfrac{f_1-f}{\lambda_{min}},$
\noindent where $\lambda_{min}$ is the fraction of the mining power controlled by the least significant miner in the underlying proof-of-work blockchain,
$f_1$ is the sum of the fees of a block containing the punishment transaction, and $f$ is the sum of the fees of a block containing average transactions.
%
We observe, however, that as $\lambda_{min}$ can be arbitrarily small, the bribing amounts required can significantly exceed the funds typically locked in PCNs, rendering the bribing attack impractical.
Moreover, even in blockchains with rather concentrated mining power, like in Bitcoin~\cite{long2022measuring}, the cost of a bribing attack is very high. 
For example, conservatively assuming that the smallest miner has $10^{-4}$ of the total mining power\footnote{In the Section~\ref{sec:experimental}, we experimentally show, that the mining power of the weakest miner in the system can be fairly assumed to be of magnitude $10^{-12}$}, the cost of the attack as analyzed in~\cite{zeta} would be at least $1$ BTC, for $f_1-f \approx 10^{-4}$ BTC. As a result, it would be irrational to perform a bribing attack of this sort, as the average closing price for $1$ BTC between, for instance, 2019 and 2022 was $23,530.92$ USD\footnote{\url{statmuse.com}},  which is more than 10-fold the current total value locked on average in a Lighting channel\footnote{\url{https://1ml.com/statistics}}.
This naturally leads to questioning \emph{whether there is potential to amplify such attacks to the extent they pose a genuine threat to the security of PCNs}.

\subsubsection*{Our Contribution}
In this work, we show that \emph{bribing costs can be significantly reduced}, thereby {making timelock bribing attacks a realistic threat}.
We do so by extending the bribing attack to leverage not only the structure of transaction fees but also a threat to fork the blockchain, known as a \emph{feather fork attack}~\cite{magnani2018feather,shalini2019survey}.
In our context, a given miner executes a feather fork attack by announcing a self-penalty transaction $tx_p$. Whenever the self-penalty transaction appears on the blockchain, the miner is incentivized to fork the punishment transaction $tx_1$ on the blockchain, i.e., the miner will try to extend the blockchain based on the predecessor of the block $tx_1$ including some other block.
Specifically, a feather-forking miner is bribed to commit collateral, betting that their fork will win the race. As a consequence, their threat of forking becomes considerably credible, incentivizing other miners to follow their fork. 
The collateral is of a similar magnitude as the bribe in~\cite{zeta}, however, the miners only lock it \emph{temporarily}. We call our attack \attack.
With the feather fork at hand, the bribing cost may now be reduced from $\frac{f_1-f}{\lambda_{min}}$ to approximately:
$$\frac{2\overline{f}+2(f_1-f)}{\lambda_s},$$ 
\noindent where $\overline{f}$ is the average fee of a single transaction, and $\lambda_s$ is the mining fraction of the most significant miner. Recall that $f_1$ and $f$ denote the sum of fees of a block containing the punishment transaction and only average transactions respectively. 
To demonstrate the cost reduction of \attack, we reexamine the previous example for Bitcoin, with  $\overline{f} \approx 10^{-4}$ BTC, $f_1-f \approx \overline{f}$, and $\lambda_s \approx 20\%$.
Now, $\lambda_s$ replaces in the denominator the previously presented $\lambda_{min} << 10^{-4}$, thus \emph{yielding a bribe at least 1000 times smaller than the one  in~\cite{zeta}}.

To derive this result, we present a formal model of mining games with forking, extending the conditionally timelocked games introduced in~\cite{zeta}. In the game with forks, miners may now choose, in each round, (a)  which transactions to mine, and (b) whether they want to continue one of the existing chains or they intend to fork one of the chains. All miners know the choices of the winner of each block, as a feather-forking miner locks collateral on-chain.

To empirically estimate the cost reduction of \attack, we analyze the historical data of real-world blockchain implementations. Among others, we analyze the average block rewards and fees, as well as the hash power present in the system and available to a single miner, primarily for Bitcoin in 2022. Given the officially available data, we observe that \emph{the cost of our attack can be as cheap as \$125} (for 1 BTC $\approx$ \$25.000). In general, the cost of our attack can be up to  $ 10^{10}$ times cheaper than the bribe required in~\cite{zeta} according to our findings. Hence, even considering a collateral of around $\$30,000$, \attack is substantially more cost-efficient, and, by extension, more probable to occur.

\section{Background}\label{subsec:mining}
In this section, we first describe the necessary context required to understand \attack.

\subsection{Timelocked Bribing Attack}
Whenever miners decide to create a new block, they select some set of transactions from all transactions posted on the mempool, which is a database of all publicly visible transactions.
Mining pools and individual miners usually choose the transactions with the highest fees first, as they are part of their reward for a successfully mined block. Miners are aware that some transactions may be dependent on each other. For instance, two transactions that spend the same Unspent Transaction Output (or UTXO) in Bitcoin, cannot be both published on-chain; the transaction that is validated first, i.e., is included on a block of the longest chain, immediately deems invalid the other transaction. If from two interdependent transactions,  only one can be published directly, while the other is timelocked and thus can only be published after some time elapses, we refer to this pair of transactions as a \emph{conditionally timelocked output}.
This conditionally timelocked output is the target of timelock bribing attacks: the owner of the coins of the transaction that is valid only after the timelock expires attempts to bribe the miners to ignore the currently valid competing transaction. 
Thus, for the miners that observe transactions on the mempool, sometimes it may be profitable to censor one transaction in order to mine another one that provides a greater gain in the future.

\subsection{Timelock Bribing in the Bitcoin Lightning Network}
Next, we describe the timelock bribing attack in the context of the most widely deployed payment channel network, namely the Bitcoin Lightning Network (LN).
In LN, a single on-chain transaction called the funding transaction, opens a channel between parties $\party_1, \party_2$. Next, parties exchange with each other signed messages off-chain which update the state of their accounts. If the parties are honest and responsive, they may close the channel in collaboration. To do so, the parties post a mutually signed transaction that spends the output of the funding transaction and awards each party their fair share of funds.  
However, if a dishonest party $\party_2$ attempts to publish on-chain an old state that she profits from comparably to the latest agreed state, her funds will remain locked for the so-called dispute period. During this period, 
the other (rational) party $\party_1$ will try to revoke the old state, by sending a transaction $tx_1$ to the mempool called the revocation transaction (or breach remedy).  Transaction $tx_1$ awards all the channel funds to the cheated party  $\party_1$. We denote by $f_1$ the miner's fee to include a block with $tx_1$.

In this case, the malicious party $\party_2$ can launch a timelocked bribing attack,  attempting to bribe the miners to ignore $tx_1$ for the dispute period $T$ such that $\party_2$ gains the additional funds. Specifically, $\party_2$ may send in the mempool a block with fee $f_2$ that includes transaction $tx_2$, with  $f_2>f_1$,  that is only valid if no miner includes in their winning block containing $tx_1$ within time $T$. 
Consequently, if the revocation transaction $tx_1$ is not published on-chain within $T$, $\party_2$ can spend the funds of the old state and the next winning miner will be awarded $f_2$. 
The pair of transactions $tx_1$ and $tx_2$ is now a conditionally timelocked output.


Assuming that for an average block of transactions, the users get in total $f$ fees,  the following holds~\cite{zeta}: if $f_2-f >\frac{f_1-f}{\lambda_{min}}$ then all rational miners will choose to wait for $T$ rounds and publish a block containing $tx_2$. 

\subsection{Feather Fork Attack}
A feather fork, as introduced in~\cite{forking_discussion}, is an attack on Bitcoin wherein a miner threatens to fork the chain if selected transactions are included. This intimidation mechanism aims to subtly alter the miners' block acceptance policy: 
the threatened miners may exclude the selected transactions in order to mitigate the risk of losing their mining reward~\cite{temporary}. As feather forking relies on economic incentives, the attacker may increase their probability of success by bribing other miners to follow their short-lived fork, e.g., committing to pay them the block rewards they may lose by censoring the selected transactions. 

Unlike a ``hard'' fork, where miners exclusively mine their own chain version regardless of its length compared to other versions, a feather fork entails mining on the longest chain that excludes selected transactions and does not fall significantly behind its alternatives. Thus, a feather fork is less disruptive and more likely to be adopted by the network, making it a potentially powerful tool in the hands of a malicious actor. 
In this work, we employ this tool to enhance the likelihood of a successful timelocked bribing attack.

\section{\attack Attack}
\label{sec:self:penalty}
We now introduce our novel attack, termed \attack, that combines the timelock bribing attack in LN with the feather fork attack. 
We assume the existence of a payment channel between parties $\party_1$ and $\party_2$. Similarly to timelock bribing, we consider $\party_2$ to be malicious and attempt to close the channel with $\party_1$ in an old state using the transaction $tx_2$. Consequently, $\party_1$ is expected to attempt to revoke the old state. To prevent the inclusion of the revocation transaction, $\party_2$ bribes the miner $\player_s$ with the highest mining power $\lambda_s$ to threaten others with a fork if they add the unwanted transaction $txs_1$ on-chain. 
This action is implemented through a self-penalty mechanism where the bribed miner temporarily locks collateral which can only be reclaimed in case a block $txs_2$ containing $tx_2$ appears on-chain (and thus any block $txs_1$ containing $tx_1$ is not included on-chain). In essence, the bribed miner ``bets'' that the revocation transaction will be censored, thus rendering the threat credible for the rest of the miners. 

To realize \attack, there are two mechanisms that should be implementable on-chain: a) the bribe transaction that should only be spendable if $txs_2$ is included on-chain, and b) the self-penalty mechanism that enables the bribed miner $\player_s$ to lock collateral $P$ (with transaction $tx_{p_1}$) and then reclaim it only if $txs_2$ is included on-chain (with transaction  $tx_{p_2}$). 
We implement the bribing and self-penalty mechanisms in Bitcoin script, using the conditioning enabled by the UTXO (Unspent Transaction Output) structure.
We note that in Ethereum,  preparing a smart contract that has access to the state of the closing channel suffices to implement the bribing and self-penalty mechanisms.

In detail, a single bribing transaction $tx_b$ and two special transactions $tx_{p_1}, tx_{p_2}$ are introduced. Let us assume that the cheating party $\party_2$ is bribing the miners to launch the attack conditioned on the inclusion of its transaction $tx_2$. To do so, $\party_2$ creates a bribe transaction $tx_b$ with input the party's money from $tx_2$ and outputs three UTXOs, one given to the miner that mines this transaction, one dummy output owned by player $\player_s$ (i.e., the miner bribed to perform feather forking), and one that returns the rest of the money to $\party_2$.
Now, $\player_s$ creates a transaction $tx_{p_1}$, locking a deposit $P$, that is spendable via a multisignature of $m$-out-of-$n$ parties (e.g., $m=n/2$), one of which is $\player_s$'s signature.
%
Then, $tx_{p_2}$ is created with two inputs: the output of $tx_{p_1}$ and the dummy output of $tx_b$. Consequently, $tx_{p_2}$ is spendable only if it is signed by at least $m$ parties of the predefined set $n$ and $tx_2$ is validated on-chain.
Upon receiving $tx_{p_2}$  signed by at least $m-1$ parties, $\player_s$ signs and posts it on-chain.
Assuming that no subset of size $m-1$ of the rest $n-1$ parties will collude with the miner to spend the deposit, the deposit can be claimed by the miner only if transaction $tx_2$ is included on-chain. The security of this scheme depends on the selection of the $n-1$ parties, which can be in principle conditioned on the honest majority assumption of the blockchain via subsampling.\footnote{One could also consider using  Trusted Execution Environment (TEE) that outputs $tx_{p_2}$ only when $txs_2$ appears on the blockchain. Additionally, note that on Ethereum,  preparing a smart contract that has access to the state of the closing channel is sufficient to implement the penalty mechanism.}

Now, the malicious party $\party_2$ together with a selected miner $\player_s$ can launch the $\attack$ attack as follows: 
They can create special transactions $tx_b$ (as a companion transaction for $tx_2$) and the self-penalty transactions $tx_{p_1}, tx_{p_2}$ to publicly announce a credible threat that the transaction $tx_1$ will be forked once it appears on the blockchain. They publish these special transactions in the mempool as transaction sets $txs_2$ (that contains $tx_2$ and $tx_b$) and $txs_{p_1}, txs_{p_2}$ (that contain $tx_{p_1}, tx_{p_2}$, respectively).

We show later that this attack significantly reduces the cost of the required bribe from  $\frac{f_1-f}{\lambda_{min}}$ to approximately $\frac{2\overline{f}+2(f_1-f)}{\lambda_s}$. The required collateral that is eventually reclaimed by the bribed miner is expected to be $\lambda_s \cdot \base$ (where $\base$ is a constant in Bitcoin block reward independent of the user fees), which is comparable to the original bribe needed in \cite{zeta}.
Figure~\ref{fig:cases} illustrates the comparison of \attack with \cite{zeta}, while Figure~\ref{fig:attack:details} below depicts the details of \attack.

\begin{figure}[!ht]
\centering
\begin{subfigure}[][250pt][t]{.32\textwidth}
  \centering
  \caption{}
  \includegraphics[width=\textwidth]
  {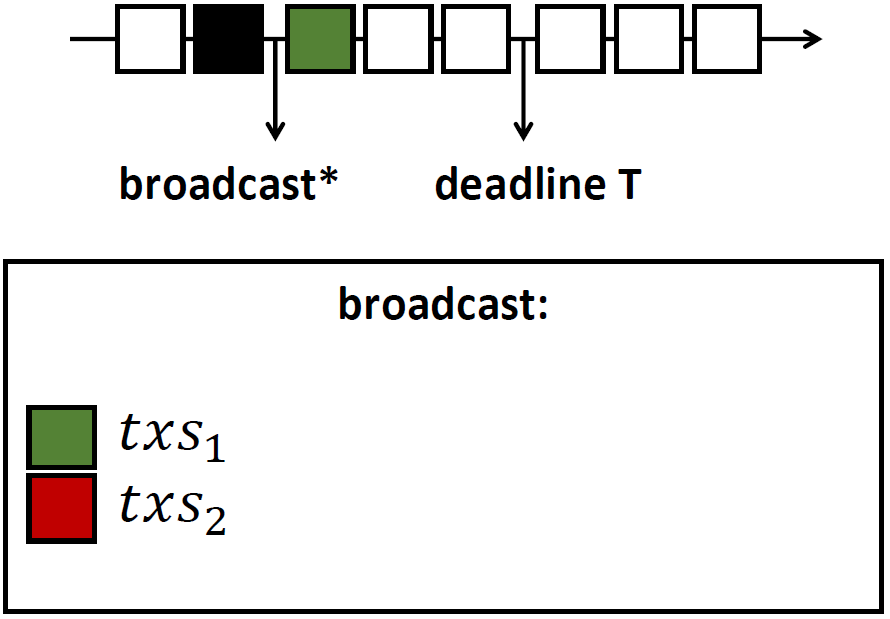}
  \label{fig:normal}
\end{subfigure}
\begin{subfigure}[][250pt][t]{.32\textwidth}
  \centering
  \caption{}
  \includegraphics[width=\textwidth]
  {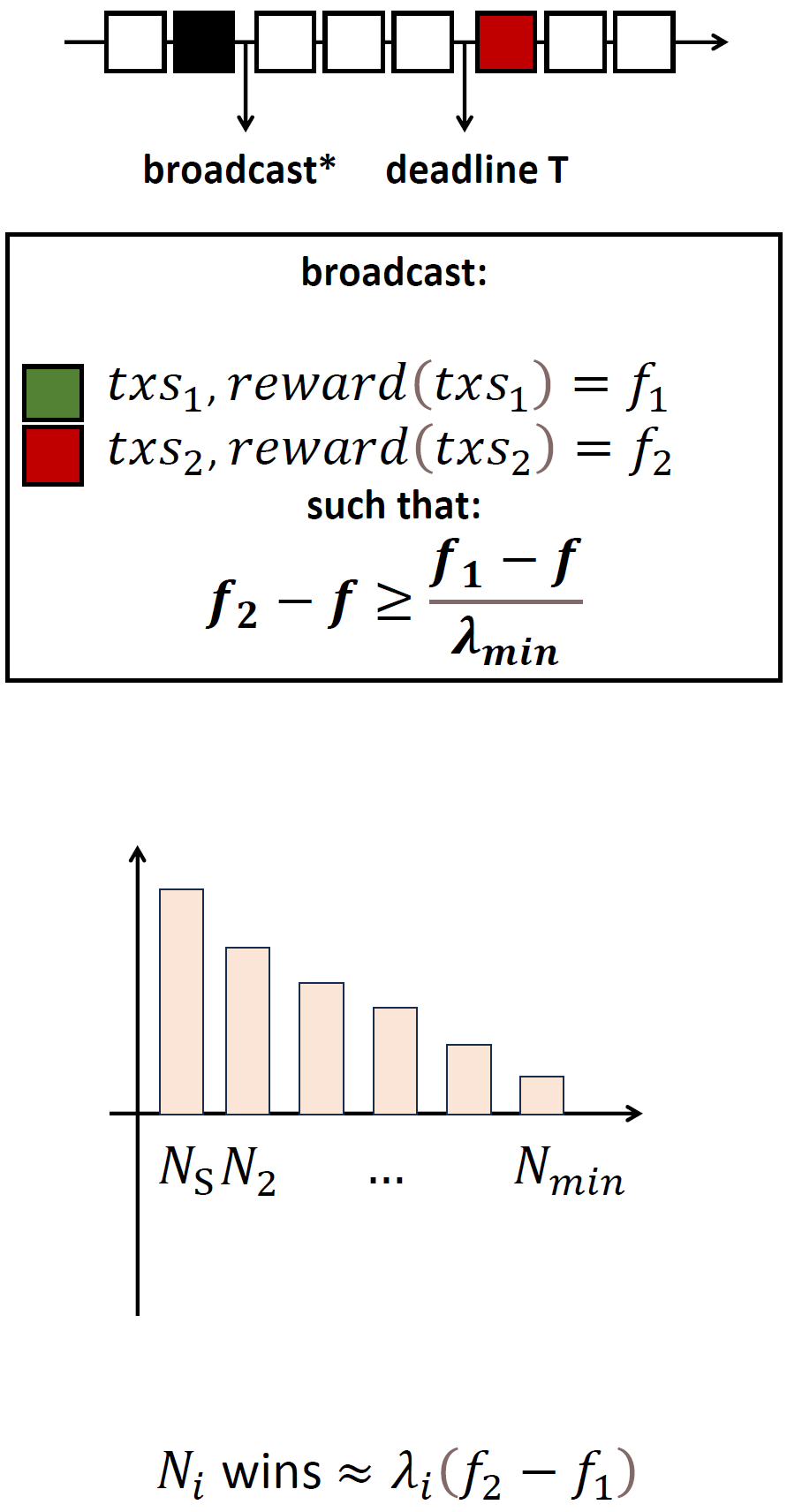}
  \label{fig:adv}
\end{subfigure}
\begin{subfigure}[][250pt][t]{.32\textwidth}
  \centering
  \caption{}
  \includegraphics[width=\textwidth]
  {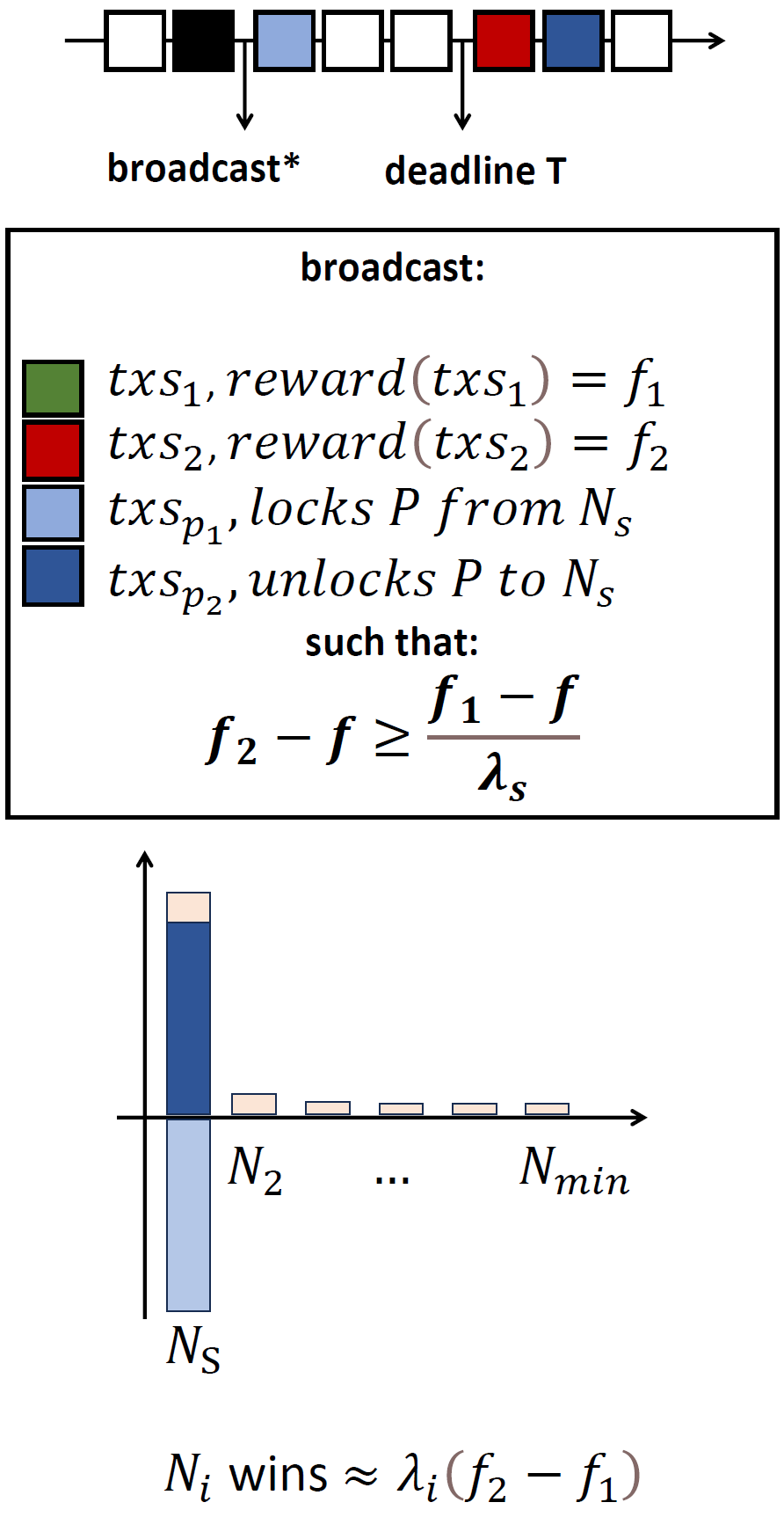}
  \label{fig:adv2}
\end{subfigure}
\vspace{2.1cm}
\caption{Comparison of the honest execution, the attack from~\cite{zeta} and our $\attack$. \textbf{(a) Honest execution:}  once an old state appears on-chain (black rectangle), $\party_1$ gets an option to revoke this state with a transaction $tx_1$ (included in the block $txs_1$ which is published in the first round). \textbf{(b) Attack  in~\cite{zeta}:} the bribing party publishes $tx_2$ and $tx_b$ included in a single block $txs_2$, with a large miner fee (reversely proportional to the fraction of the mining power $\lambda_{min}$ of the least significant miner). The miners skip mining $txs_1$ in the first round, and mine $txs_2$ in the last round.  \textbf{(c) \attack:} the bribing party publishes $txs_2$ with a fee sufficient to bribe only the strongest miner (with $f_2-f$ reversely proportional to $\lambda_s$). The strongest miner publishes the self-penalty transactions $tx_{p_1}, tx_{p_2}$ that can be mined in transaction sets $txs_{p_1}, txs_{p_2}$. In the first round, the miner $\player_s$ locks $P \approx \lambda_s \cdot \base$ to the deposit transaction $tx_{p_1}$, thus threatening other miners that they will be forked once $txs_1$ is mined before the deadline. After the deadline the transaction set $txs_2$ is published and the miner $\player_s$ may collect back the deposit using $txs_{p_2}$.}
\label{fig:cases}
\end{figure}

\subsubsection*{Implementation Details of $\attack$}
\label{sec:details}
Figure~\ref{fig:attack:details} contains a diagram depicting the details of our attack.
At first, a Lightning Channel is opened with a single funding transaction and allows parties $\party_1, \party_2$ to make an arbitrary number of off-chain state transitions of their funds. Once one of the parties ($\party_2$ in our example) decides to publish an old state ($comm_i$ in our example) at a chain of length $T_0$, the opportunity to manipulate the behaviour of miners' is opened. 
Before the chain reaches length~$T_0 + 1$, the transactions $tx_1$ and $tx_2$ are published. Transaction $tx_1$ allows $\party_1$ to revoke a dishonestly committed state. Transaction $tx_2$ allows $\party_2$ to collect and manage dishonest funds after $T$ rounds.
Along with $tx_1$ and $tx_2$, the bribing transaction $tx_{b}$ and self-penalty transactions $tx_{p_1}, tx_{p_2}$ are published. 
On the chain of length $T_0$, the miners may decide to mine the transaction $tx_{p_1}$ that would lock some amount of coins of one of the miners (say $\player_s$).
If so, all miners are threatened that they will be forked once $tx_1$ appears on the blockchain until the chain reaches length~$T_0 + T$.
On the chain of length~$T_0+T$, the transaction $tx_2$ along with bribing transaction $tx_b$ may be published and the selected miner $\player_s$ may collect back the deposit with the transaction $tx_{p_2}$.
\label{sec:attack:details}
\begin{figure}[]
    \centering    
\makebox[\textwidth][c]{\includegraphics[angle=0,origin=c,width=20cm]{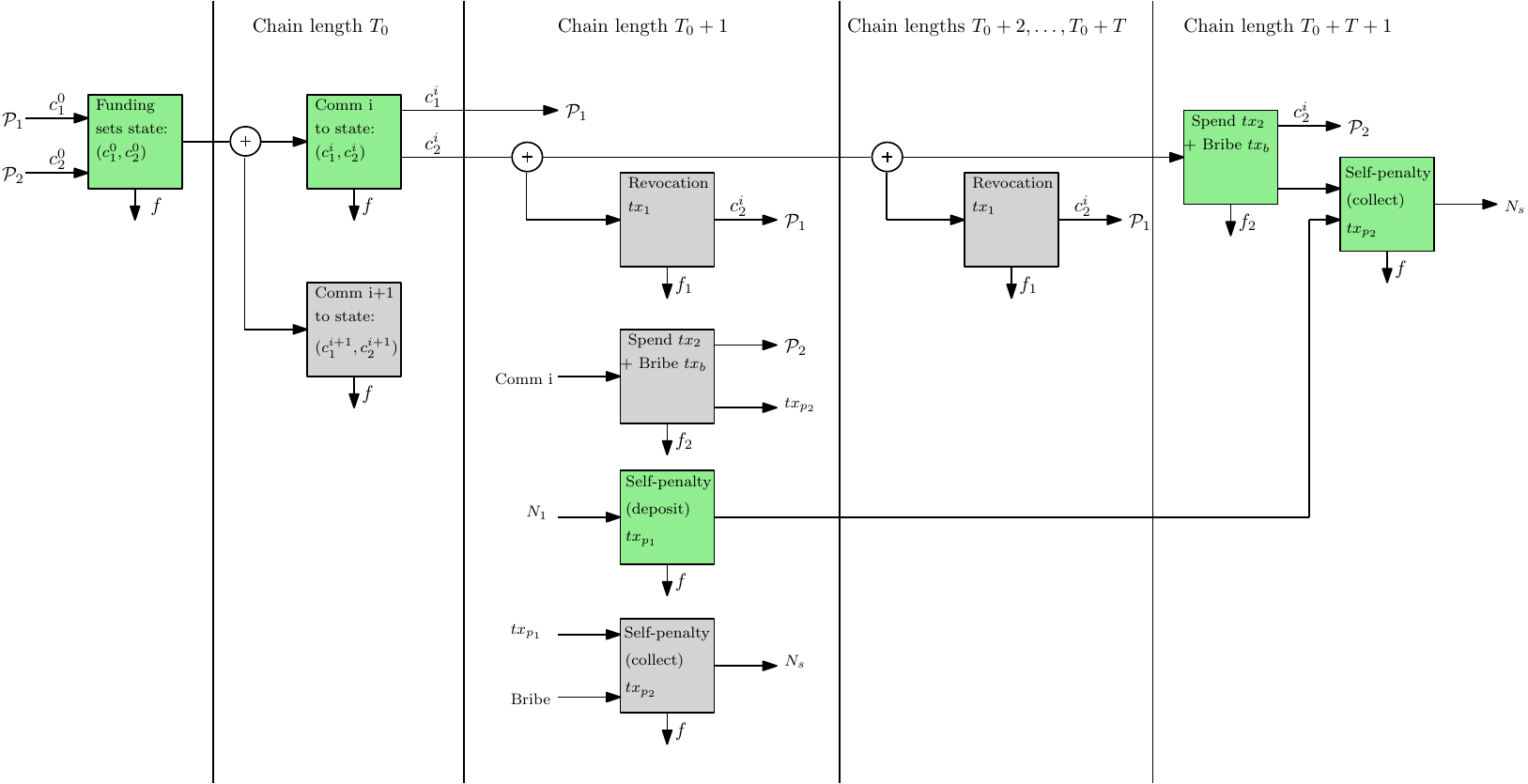}}
\caption{The $\attack$ attack. The green boxes indicate the transactions that should be put on-chain to run a successful $\attack$ attack. The grey boxes indicate the transactions that should be published on the mempool before the chain reaches a specific length. For instance, Spend transaction $tx_2$ has to be published on the mempool before the chain reaches length $T_0 + 1$, even though it can not be published on the blockchain until the chain reaches length $T_0 + T$. The arrows going into the boxes indicate the spending conditions of the transactions and the arrows going out of the boxes indicate how the funds of the boxes can be spent.}
\label{fig:attack:details}
\end{figure}



\section{Our Model}
\label{sec:our:model}
In this section, we gradually define our game that models the process of mining that takes into account the forks. A summary of our notation can be found in 
Table~\ref{tab:notation}.

\subsection{Preliminaries}
\label{sec:preliminary:definitions}
Let us begin by recalling the conditionally-timelocked output definition from~\cite{zeta}.
\begin{definition}[Conditionally timelocked output~\cite{zeta}]
A conditionally timelocked transaction output $txo(T_0,T,cond_1,cond_2)$ is a transaction output of a transaction $tx$ with spending condition $cond_1 \lor cond_2$. Condition $cond_1$ is not encumbered with any timelock and condition $cond_2$ is encumbered with a timelock that expires $T$ blocks after the block with height $T_0$, where $tx$ was published.
\end{definition}
The game with forks is defined for a fixed set of players (miners)  $\players = \left\{\player_1, \ldots, \player_n\right\}$ with a tuple of mining powers $\boldsymbol{\lambda} = (\lambda_1, \ldots, \lambda_n)$ and will last $R$ rounds.
Notice that we focus on proof-of-work blockchains that employ the so-called Nakamoto consensus, such as Bitcoin~\cite{nakamoto2008bitcoin}. We assume that in such environments miners (players) tend to form mining pools to (a) bypass the task of verifying transactions -- the pool's manager dispatches a list of valid transactions for inclusion -- and more importantly, (b) to guarantee a more stable income as individual mining carries substantial deviation. 
Empirical evidence supporting this assumption, drawn from Bitcoin, is detailed in Section~\ref{sec:data}.

For the rest of this work, each miner is assumed to either mine independently or stick to a selected mining pool throughout the execution of the game, i.e., we treat the mining pools as single players in the game.
As already mentioned, we assume that the game lasts for a fixed number of rounds. Alternatively, we could consider a scenario where the game lasts until the main chain reaches a predefined length. The first assumption is more suitable for the time periods when the block rate is constant. On the other hand, the second modeling approach is better for longer periods where the mining difficulty of blockchains is adjusted to achieve a given number of blocks within a given time unit. 

\subsubsection{Global State Object} We introduce a global state object $\state = \{ S_1, \ldots, S_{|S|} \}$  that describes a set of currently mined chains on the blockchain. 
Each $S_i$ consists of a \emph{list} (chain) of pairs $S_i = [\ (\block_1,\winner_1)$, $\ldots, (\block_{|S_i|},\winner_{|S_i|})]\ $ describing successfully mined blocks. In each pair $(\block_j,\winner_j) \in S_i$, $\block_j$ describes a set of transactions included in the block, and $\winner_j \in \players$ indicates a player that successfully mined the block.
\subsubsection{Allowed Actions}
We define the classes of possible actions in our game:
\begin{itemize}
    \item All chains in a state can be \emph{continued}. When the operation \emph{continue} is successful, a new pair $(\block,W)$ is appended to the continued chain in the global state object.
    \item Chains of length at least $1$ in the state can be \emph{forked}. Whenever one of the players successfully \emph{forks}, the new (duplicate) chain is created in the global state object in the following manner:
\begin{itemize}
\item the source fork is duplicated; and
\item a new block replaces the latest block in the duplicate. 
\end{itemize}
For instance, let $\state = \{\ [(\block_1, \winner_1), (\block_2, \winner_2)]\ \}$ be a current state with a single chain $S_1$. Then, after a successful fork of $S_1$ with $(B_3,W_3)$, one gets $\state = \{\ [(\block_1, \winner_1)$, $(\block_2, \winner_2)],\allowbreak [(\block_1, \winner_1), (\block_3, \winner_3)]\ \}$.
\end{itemize}

Notice that on existing blockchains, miners can fork a chain or mine on top of an arbitrary block in one of the existing chains. However, forking that starts at old blocks is less likely to outrun the longest chain. For that reason, we exclude this possibility from the game (following \cite{forking_discussion,temporary,forkanalysis}). In other words, miners in our model can fork only the last transaction on one of the chains, and then either the original chain or the fork becomes stable whenever it reaches a length equal to the length of the original chain plus one. The forks can be modeled differently, e.g., assuming a longer fork length or using a finite automaton definition. We expect our results to hold in the alternative modeling as well, but with different parameters of our solution would change.

\subsubsection{The Abandon Rule}
Let us define the abandon rule $\mathsf{abandon}: \states \rightarrow \states$ that is later used to 
abandon old chains no longer useful in the game. As we allow forking only the newest block in a chain, our  $\mathsf{abandon}$ rule will make each chain that outruns the length of other chains the only chain in the game. That is, for any $S_i \in \state$:
$\exists_{S_j \in \state}\  len(S_j) \geq len(S_i) + 1$  the abandon rule removes $S_i$ from the state $\state$.

\subsubsection{Types of Transaction Sets}
\label{sec:transaction:sets}
 Each block mined in the game (denoted as $\block$) includes only one of the transaction sets listed below:
\begin{itemize}
     \item      
     An unlimited amount of unrelated transaction sets $txs_u$ that contain average transactions $tx_u$ unrelated to any special transactions listed below. These transaction sets can be mined at any point in the game.
    \item A transaction set $txs_1$ that contains the transaction $tx_1$ that spends money of $txo$ under $cond_1$. As long as $txo$ is not spent, this transaction set can be mined on a chain of any length. The rest of the transactions in this transaction set are unrelated transactions $tx_u$.
    \item A (bribing) transaction set $txs_2$ that contains the transaction $tx_2$ that spends money of $txo$ under $cond_2$ and a bribing transaction $tx_b$. As long as $txo$ is not spent, this transaction set can be mined on a chain of length $\geq T$. The rest of the transactions in this transaction set are unrelated transactions $tx_u$.
    \item A special transaction set $txs_{p_1}$. In the first round of the game, one of the players (say $\player_1$) might decide to create a transaction set $txs_{p_1}$ with a self-punishment transaction $tx_{p_1}$ (see the description of the penalty mechanism in the Section~\ref{sec:details}). The player chooses the amount $P$, which he deposits to the transaction. The rest of the transactions in this transaction set are unrelated transactions $tx_u$.
    \item A special transaction set $txs_{p_2}$ with transaction $tx_{p_2}$. The transaction $tx_{p_1}$ assures that the player $\player_1$ that created the transaction $tx_{p_1}$ may collect back the deposit $P$ by publishing the transaction $tx_{p_2}$, but only after the transaction set $txs_2$ is published on the blockchain (see the Section~\ref{sec:details}). The rest of the transactions in this transaction set are unrelated transactions $tx_u$.
\end{itemize}

\subsubsection{Rewards}
\label{sec:rewards}
We assume that a miner, after successfully mining a transaction set $txs_i$ on the main chain, gets a reward $\reward(txs_i)$ equal to $\base + f_i+P$, where $\base$ is a constant block reward and $f_i$ is a sum of user fees input by users posting transactions in the transaction set $txs_i$. 
Whenever $txs_i$ contains a transaction that locks C coins from the miner's account, we set P to be equal to $-C$. Analogously, when a miner collects $C$ as one of the transactions from $txs_i$, we set the parameter $P$ to $C$.  
The reward for mining a block depends on the number of transactions within the block and their fees. The fee for a more complex transaction is typically higher as it occupies more space in a block. In this respect, we make the following assumptions that correspond to the current Bitcoin values (see Section~\ref{sec:fee} for more details):
\begin{itemize}
    \item Each unrelated transaction set $txs_u$ has on average $m$ transactions, its reward $$\reward(txs_u) = \base + f = \base + m \cdot \overline{f},$$ where $\overline{f}$ is an average user transaction fee. We also assume that $\overline{f} < 10^{-4}\base$.
    \item For other transaction sets with an uncommon functionality, e.g., $txs_{j}$, we assume it contains in total $m-c_j$ unrelated (average) transactions, a special transaction $tx_j$ that takes space of $c_j$ average transactions, where $c_j < m$. In total, $$\reward(txs_j) = \base + (m-c_j)\overline{f}+c_j\overline{f_j} + P=\base +f_j + P,$$ where $f_j = (m-c_j)\overline{f}+c_j\overline{f_j}$.
    The interpretation of the parameter $c_j$ is that it describes the number of transactions needed in a block to implement the uncommon functionality, each of them with fee $\overline{f_j}$.
\end{itemize}

Notice that in Section~\ref{sec:fee}, based on empirical data, we show how block rewards fluctuate in the real world. However, following~\cite{zeta}, we assume that standard transactions have a constant (average) reward and that all blocks have a constant number of transactions. In the list above, we refer to each standard transaction as an average transaction.

\subsubsection{Mining Power Distribution}

We assume the following mining power allocation $\lambda=(\lambda_1, \ldots, \lambda_n)$ among the players (see discussion in the Section~\ref{sec:experimental}):
\begin{itemize}
    \item There exists a single "strong" player (say player $s$) with mining power $\lambda_s \geq 20\%$. All other players have mining power smaller than $\lambda_s$.
    \item There exists a "relatively" strong player (say player $i$) with mining power $1\% < \lambda_i < 2\%$.
    \item We assume that all players with mining power less than $1\%$ have collective power at most $5\%$.
    \item The smallest mining power is of any miner in the game is $\lambda_{min} > 10^ {-100}$.
\end{itemize}

\subsubsection{Players' reluctance to believe a threat}
\label{sec:threat}
In the mining process, players can threaten other players that they will fork their blocks, once these blocks appear on the blockchain, as in the feather forking attacks. However, without any additional assumptions, there exist multiple solutions for such a setting~\cite{karakostas2024blockchain}.
To derive a single solution in our game, we make a conservative assumption that the players do not conduct the forking action if it \emph{can} result in financial losses to them. 
In other words, we accept only threats from a player who strictly profits from forking a selected transaction, i.e., the forking action is a dominating one for the player in this particular state.
\subsubsection{No Shallow Forks Conjecture} 
\label{sec:shallow:forks}
The Conjecture~\ref{conj:explicit:reward} below is a second assumption (together with the assumption that players are reluctant to believe a threat) that allows us to achieve a unique solution in the game with forks. In the conjecture, we assume that players have the option to fork a transaction only when they see an explicit opportunity of mining any other transaction with a \emph{higher} miner's fee\footnote{We denote that, alternatively to  Conjecture~\ref{conj:explicit:reward}, one could assume that the size of the mining fees in the game is limited, as excluding transactions with outstandingly high fees can also discourage the players from forking these transactions.}, initially blocked by the currently forked transaction. That is, we forbid shallow forks in the model.
\begin{conj}[No shallow forks]
\label{conj:explicit:reward}
At any point in the game $\game$, the players will not start a fork of a chain ending with a transaction set $txs_a$, unless they see an explicit opportunity to mine a fork containing at some point $txs_b$, initially blocked by $txs_a$ (e.g., by the conditionally timelocked output transaction mechanism, or the self-penalty mechanism). What is more, the players must be aware that $\reward(txs_b) > \reward(txs_a)$.
\end{conj}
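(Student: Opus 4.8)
The plan is to reduce the conjecture to a local, per-subgame domination statement, which is exactly the level of analysis sanctioned by our credible-threat assumption (we only admit strategies that dominate in every subgame $\game^{\state, r}$). Concretely, I would fix an arbitrary subgame in which some chain $S \in \state$ ends with a transaction set $txs_a$, and assume for contradiction that a rational player $\player_i$ initiates a fork of $S$ even though no transaction set $txs_b$ with $\reward(txs_b) > \reward(txs_a)$ is unblocked by removing $txs_a$. I would then compare two actions available to $\player_i$ at this node: the candidate action that forks $S$, and the alternative that continues the (weakly) longest chain. The goal is to show that the continuing action yields at least the expected reward of the forking action in this subgame, so that the fork is never a best response and hence never part of a subgame-dominant profile.

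The heart of the argument is a no-content-gain lemma combined with a coupling over the round-winner randomness. Because the abandon rule resolves a one-block fork as soon as one branch is a block ahead, and because we only permit forking the last block, the forked branch can differ from the continued chain only in the contents of a single replaced block. If no higher-value opportunity is unblocked, then every block $\player_i$ can legally place on the fork has reward at most $\reward(txs_a)$, so forking offers no content advantage; its only possible effect is to transfer an average-value last block from its previous winner to $\player_i$, contingent on winning the ensuing race. I would then couple the two scenarios on the sequence of round winners: on the continuing branch, each round that $\player_i$ wins appends a fresh block of value at least as large while keeping $\player_i$ strictly safe from the abandon rule, whereas on the forking branch the same won rounds are partly spent merely catching up to and overtaking the original chain, an event of probability strictly below one given our mining-power assumptions ($\lambda_{min} > 10^{-100}$ and the presence of other non-negligible miners). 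Summing rewards along the coupled paths shows the forking action is weakly, and in fact strictly, dominated, giving the desired contradiction.

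The main obstacle --- and the reason the statement is posed as a conjecture rather than a theorem --- is that the argument above is clean only when the other miners' responses and the downstream game are held fixed, whereas $\game$ is a genuine multi-round stochastic game in which opponents may or may not follow the fork and in which a shallow fork now could, in principle, set up a more profitable configuration several rounds later. A fully rigorous proof would need to run backward induction over the finite horizon $R$, establishing the no-content-gain lemma uniformly across all reachable subgames and all opponent best responses, and in particular ruling out any indirect benefit of a shallow fork that is not visible in the immediate round. Closing this gap --- showing that the local domination survives the strategic feedback of the full game --- is the step I expect to be genuinely hard, and it is precisely what separates our provable per-round analysis from the global statement we adopt as Conjecture~\ref{conj:explicit:reward}.
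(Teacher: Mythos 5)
There is nothing in the paper to compare your argument against: the statement is deliberately adopted as a \emph{conjecture}, i.e.\ an axiom of the model, and the paper never attempts a proof. Its only justification is the informal remark in Section~\ref{subsec:assumptions} that rampant shallow forking would amount to a denial-of-service on the chain and is therefore not a realistic miner strategy. So your proposal is not a rederivation of an existing proof but an attempt to upgrade a modeling assumption to a theorem, and by your own admission it does not close. The local core of your sketch is sound and matches the intuition the paper relies on elsewhere: with no higher-reward transaction set unblocked, a one-block fork replaces a block of value $\reward(txs_a)$ by one of at most the same value, so the forker's only possible gain is capturing that block's reward $\base+f$ for themselves, at the cost of exposing it to the abandon rule; coupling on round winners then shows continuing weakly dominates. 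This is essentially the same calculation the paper does carry out in Theorems~\ref{thm:continue:mining} and~\ref{thm:no:forks} for the specific block $txs_1$.

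The genuine gap is the one you name, but it is worse than a mere bookkeeping issue: your coupling step needs to know how the \emph{other} miners respond once a shallow fork appears (whether they join it, which determines the race probabilities), and in the paper those responses are themselves derived \emph{using} Conjecture~\ref{conj:explicit:reward} to restrict which forks can occur. A non-circular proof would have to run the backward induction over $\game^{\state,r}$ without ever invoking the conjecture, over a state space in which arbitrary shallow forks are admissible, and establish the no-content-gain domination uniformly against all opponent best responses in that larger game. Until that is done, your argument shows at most that a shallow fork is not a profitable unilateral deviation against the equilibrium the paper analyzes, not that it is dominated in every subgame. That residual circularity is precisely why the paper leaves the statement as a conjecture, and your write-up should present it the same way rather than as a proof with a deferred lemma.
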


Note that given the feasible transaction sets in the Section~\ref{sec:transaction:sets} and the reward structure defined in the Section~\ref{sec:rewards}, whenever $\reward(txs_2) > \reward(txs_1)$ and $\reward(txs_1) > \reward(txs_{p_1})$, according to the Conjecture~\ref{conj:explicit:reward}, the players in our game can attempt to fork only $txs_1$ to get $txs_2$ or $txs_{p_1}$ to mine $txs_1$. In other words, whenever $txs_u, txs_2,$ or $txs_{p_2}$ appear on one of the chains, they will not be forked.

\subsection{The Game}
Finally, we describe a game that models the process of PoW blockchain mining taking into account the option to conduct a forking of a block. The game proceeds in rounds; in each round, the miners can choose whether they want to continue mining one of the chains or they want to fork one of the chains. 
\begin{definition}[Conditionally timelocked game with forks]
A \emph{conditionally timelocked game with forks} $\game(\players,R)$ is a game with a finite set of players (miners) $\players = \left\{\player_1, \ldots, \player_n\right\}$, where $n = |\players|$, that lasts $R$ rounds. We define a tuple of mining powers $\boldsymbol{\lambda} = (\lambda_1, \ldots, \lambda_n)$ associated with the players, such that $\sum_{\lambda_i \in \lambda} = 1$. In the following, we will write $\game$, instead of $\game(\players,R)$, when $\players,R$ are obvious from the context.

Given the global state object, the set of possible actions, the rewards structure and the mining power distribution defined above, the game is played as follows: 
\begin{enumerate}
    \item The game starts with the state $\state = \{ [\ ]\}$ which is updated exactly $R$ times. All players are aware that this state is built upon a blockchain of height $T_0$ which includes an unspent conditionally timelocked transaction output $txo(T_0,T,cond_1,cond_2)$, where $T < R$.
    \item At each round $1 \leq r \leq R$, players $\player_i \in \players$ choose which of the subchains $\state_k \in \state$  they build upon, whether they will \emph{continue} or \emph{fork} this chain and which of the feasible blocks (built upon one of the transaction sets) they want to add in case they are declared as the winner. 
    Let $\actions\left(S,r\right)$ denote the set of all feasible actions for the state $(\state,r)$ described as triplets $(S, \decision, \mathsf{transaction\_set})$, where $S \in \state$, and $\decision \in \{\mathit{continue}, \mathit{fork}\}$. Based on $\boldsymbol{\lambda} = (\lambda_1, \ldots, \lambda_n)$, one player is declared as the winner in the round $r$, and the state object is modified accordingly.
    \item After each round, the abandon rule $\mathsf{abandon}$ is run on the current state.
\end{enumerate}

When the final round $R$ of the game is over, it finishes in some state $\state$, and rewards are given to the players.
By $\state^{*}$, let us denote the longest chain in the state $\state$. Whenever the state has multiple longest chains, $\state^{*}$ denotes the oldest of the longest chains of $\state$.
After the final round $R$, in the state $\state$, the reward given to a player $\player_i$ is:
$\reward_i(\state) = \sum_{(\block, W) \in \state^{*}: W = i} \reward(\block).$
\end{definition}
\subsubsection{Strategies}
Notice that given the set of players $\players$, the actions continue and mine defined above, and the set of transaction sets possible to mine, one can determine the set $\states$ of all states that may happen in the game.

A strategy $\sigma_i$ for a player $\player_i$ is given by a function mapping each pair  $(\state,r) \in \states \times [R]$ to a triplet feasible for this pair $(S, \decision, \mathsf{transaction\_set})$.\\
Let $\sigma$ denote a strategy profile of all players - a tuple of strategies of all players - i.e., $\sigma = (\sigma_1, \ldots, \sigma_n)$. Given a fixed index $i$, with $\sigma_{-1}$, we will denote a strategy profile of all players, but the selected $\player_i$.

The distribution of mining power among the players $\boldsymbol{\lambda} = (\lambda_1, \ldots, \lambda_n)$, the strategy profile $\sigma$, current state $\state \in \states$ and current round $r \in [R]$ define a probability distribution function $\boldsymbol{p}_{\boldsymbol{\lambda},\sigma, \state, r, r'}:~\states~\rightarrow [0,1]$ that assigns a probability that a certain state $\state' \in \states$ is activated after round $r' \in [R]$, where $r' > r$.
Given $\boldsymbol{p}_{\boldsymbol{\lambda},\sigma, \state, r, r'}$ and the $\reward$ function, we can define the utility (the expected reward) $\expectedutil_i(\sigma)$ of each player $i$, when strategy $\sigma$ is played.
We say $\sigma^* = (\sigma_1^*, \ldots, \sigma_n^*)$ is a Nash Equilibrium if for all players $\player_i \in \players$ it holds that $$\expectedutil_i(\sigma^*_i, \sigma_{-i}^*) \geq \expectedutil_i(\sigma_i, \sigma_{-i}^*),$$
for all alternative strategies $\sigma_i$ for the player $i$.

We denote by $\game^{\state, r}$ the subgame of the game $\game$ in a round $r$, at a state $\state$. We denote by $\expectedutil_i(\sigma, \state, r)$ the utility of a player $\player_i$ in this subgame, which is the expected reward for this player once the game is over.


\section{Analysis of \attack}\label{sec:analysis}
In this section, we formally analyze \attack where a bribe transaction $tx_2$ is published, large enough to bribe a chosen miner with the highest mining power, yet significantly smaller than the value required to directly persuade all miners to skip mining the transaction $txs_1$. The selected miner is then asked to threaten others with a fork if they add the unwanted transaction $txs_1$ to the blockchain. To make these threats credible, we implemented the self-penalty mechanism (see Section~\ref{sec:self:penalty}).

\subsection{About the proofs} In the proofs, we aim to find a \emph{dominating strategy} for a player $\player_i$ in a given state $\state$ and a round $r$, i.e., a strategy that outweighs other strategies of a selected player in the given state and round. As we will move from the very last round of the game till the first round of the game, we will be able to conclude our reasoning with a single NE of the full game.  Whenever needed, we use the mathematical induction technique to show that some choice of strategy is optimal for a sequence of rounds. Usually, the base case is the last round of the game and the induction step proves that if a given strategy is dominating in a round $k+1$, then it is also a dominating strategy in a round $k$.

\noindent When we compare how the player~$\player_i$ benefits from taking two distinct actions $A,B$ in a given state $\state$ and a round $r$, we often say that there exists a constant $C$ common for these strategies. 
To this end, we assume that action $A$ refers to some strategy $\sigma_a$ of the player $\player_i$, and action $B$ refers to some strategy $\sigma_b$ of the player~$\player_i$, such that $\sigma_a$ differs only in its definition from $\sigma_b$ on the selected state $\state$ and the selected round $r$. 
The utility of the player~$\player_i$ is the same for both strategies whenever in the state $\state$ and $r$ someone else than $\player_i$ is selected as the winner of the round. 
With $C$, we denote the utility of player~$\player_i$ multiplied by the probability of this event when player $\player_i$ is not the winner of the round.
This reasoning gives us an easy-to-use method to compare utility between the strategies $\sigma_a, \sigma_b$.
We can thus compare the utilities of the player $\player_i$ in the state $\state$ and round $r$ when the two distinct strategies $\sigma_a, \sigma_b$ are selected as:
\begin{flalign*}
    \expectedutil_i(\sigma_a, \state, r) = \lambda_i(\text{utility of the player 
 } \player_i \text{ when action A was taken})+C, \\
    \expectedutil_i(\sigma_b, \state, r) = \lambda_i(\text{utility of the player 
 } \player_i \text{ when action B was taken})+C.
\end{flalign*}

\subsection{Transaction Order in a Single Chain}
\label{sec:settled}
Although, due to the definition of the player's utility function, the bribing transaction ($txs_2$) may encourage the players not to skip mining transactions with high rewards during the dispute period (e.g. $txs_1$), we first show that once timelock is over and $txs_1$ was not mined, the players will follow the default strategy to mine transactions with highest rewards first.
\begin{restatable}[]{lemma}{orderedtransactions}
Let $\game(\state, T+1)$ be a subgame in a state $\state = \{S\}$,
where the state $\state$ contains a single chain $S$ and the transaction set $txs_{p_1}$ was mined in the first round. In the next $T-1$ rounds, miners mined unrelated transactions sets $txs_u$. Furthermore, it holds that $$\reward(txs_2) > \reward(txs_1)>\reward(txs_u)\textit{ and }\reward(txs_{p_2})>\reward(txs_u).$$  Then the dominating strategy for all players in the subgame $\game(\state,T+1)$ is to mine transaction sets in the following  order: $txs_2, txs_{p_2}$, and for the rest of rounds $txs_u$.
\label{claim:transactions:order}
\end{restatable}
\begin{proof}
As $txs_{p_1}$ was mined in the first round, then in any round after the round $T+1$ there are available to mine the following transaction sets:
\begin{itemize}
    \item mutually exclusive $txs_1$ and $txs_2$ with expired timelock,
    \item  one $txs_{p_2}$ that can be mined only after $txs_2$ appears on blockchain,
    \item and an unlimited amount of unrelated transaction sets $txs_u$. 
\end{itemize} 
Since only $txs_1$ and $txs_2$ are mutually exclusive and   $\reward(txs_1) < \reward(txs_2)$, then by Conjecture~\ref{conj:explicit:reward}, whenever $txs_2$, $txs_{p_2}$ or $txs_u$ appear on the blockchain, they will not be forked. Thus only $txs_1$ may be forked in the subgame.

Once both $txs_{2}, txs_{p_2}$ are on the chain, miners can not mine any special transaction sets, and all of the miners will mine $txs_u$ till the end of the game.

Next we show that for any round $r \in \{T+2, \ldots, R\}$, in a state $\state'$ created by extending the chain in $\state$ with $txs_u$ and one $txs_2$ at any point in this chain,
 the dominating strategy for all players is to mine $txs_{p_2}$ first if it was not mined until this point. We will prove it by induction. The statement trivially holds in the last round $R$, because $txs_{p_2}>txs_u$. Now, assuming that it holds in round $R - k$, we prove that it also holds in round $R - k-1$. Any player $\player_i$ will be chosen with probability $\lambda_i$ as the winner of the round. The utility of the player $\player_i$ following a strategy $\sigma_{p_2}$ that first mines $txs_{p_2}$ in the state $\state'$ is:
    \begin{flalign*}
        \expectedutil_i(\sigma_{p_2}, \state', R-k-1) = \lambda_i(\reward(\state') + f_{p_2} + \base + \lambda_i k(f+\base))+C,
    \end{flalign*}
    for some constant $C$ that describes the expected reward of $\player_i$ in case he/she is not chosen as the winner of this round.

\noindent The utility of the player $\player_i$ following a strategy $\sigma_{u}$ that first mines $txs_u$ in the state $\state'$ is:
    \begin{equation*}
\expectedutil_i(\sigma_{u}, \state', R-k-1) = \begin{cases}
\lambda_i(\reward(\state') + f_{u} + \base)+C &\text{when $k=0$}\\
\lambda_i(\reward(\state') + f_{u} + \base + \lambda_i(f_{p_2} + \base) + \lambda_i (k-1)(f+\base))+C &\text{when $k\geq1$}
\end{cases}
\end{equation*}
From the above it follows that 
\begin{flalign*}
\expectedutil_i(\sigma_{p_2}, \state', R-k-1) > \expectedutil_i(\sigma_{u}, \state', R-k-1).
\end{flalign*}

Next we show that for any $r \in \{R, \ldots, T+1\}$, in a state $\state''$ created by extending the chain in $\state$ with $txs_u$, the dominating strategy for all players is to mine $txs_2$ first if it was not mined until this point.
Observe that in the state $\state''$ the miners can only mine $txs_u, txs_1$ or $txs_2$.
The statement trivially holds in the last round. Now, assuming that it holds in round $R - k$, we prove that it also holds in round $R - k+1$. Any player $\player_i$ will be chosen with probability $\lambda_i$ as the winner of the round, and with probability $1-\lambda_i$ someone else will be selected as the winner of the round.
Then for some constant $C$, the utility of the player $\player_i$ in a strategy $\sigma_{2}$ that first mines $txs_{2}$ is
 \begin{equation*}
\expectedutil_i(\sigma_{2}, \state', R-k+1) = \begin{cases}
\lambda_i(\reward_i(\state'') + f_{2} + \base)+C &\text{when $k=0$}\\
\lambda_i(\reward_i(\state'') + f_{2} + \base + \lambda_i (f_{p_2}+\base))+C &\text{when $k=1$} \\
\lambda_i(\reward_i(\state'') + f_{2} + \base + \lambda_i (f_{p_2}+\base) + (k-2)\lambda_i(f+\base))+C &\text{when $k\geq2$}
\end{cases}
\end{equation*}

Whereas the utility of the player $\player_i$ in a strategy $\sigma_{1}$ that first mines $txs_{1}$ is

 \begin{equation*}
\expectedutil_i(\sigma_1, \state'', R-K+1) \leq  \begin{cases}
\lambda_i(\reward_i(\state'') + f_{1} + \base)+C &\text{when $k=0$}\\
\max\{ \lambda_i (\reward_i(\state'') + f_1+\base + \lambda_i(f+\base)) + C, &\text{when $k=1$} \\
\textit{      }\lambda_i (\reward_i(\state'') + f_2+\base) + C\} & \\
\max\{ \lambda_i (\reward_i(\state'') + f_1+\base + 2\lambda_i(f+\base)) + C, &\text{when $k=2$} \\
\textit{      }\lambda_i (\reward_i(\state'') + f_2+\base+\lambda_i(f_{p_2}+\base)) + C\} & \\

\max\{ \lambda_i (\reward_i(\state'') + f_1+\base + k\lambda_i(f+\base)) + C, &\text{when $k\geq3$} \\
\textit{      }\lambda_i (\reward_i(\state'') + f_2+\base+\lambda_i(f_{p_2}+\base)) +(k-3)\lambda_i(f+\base)) C\} & \\
\end{cases}
\end{equation*}
It is again easy to see that $\expectedutil_i(\sigma_{2}, \state'', R-k+1) > \expectedutil_i(\sigma_1, \state'', R-K+1)$.
\end{proof}
The details of the proof of the above Lemma imply the following result.
\begin{lemma}
Let $\game(\state, T+1)$ be a subgame in a state $\state = \{S\}$, 
where the state contains a single chain $S$ where in $T$ rounds miners mined unrelated transaction sets $txs_u$. Furthermore, for all miners $$\reward(txs_2) > \reward(txs_1)>\reward(txs_u).$$ Then the dominating strategy for all players in the subgame $\game(\state,T+1)$ will result in the following transactions order: $txs_2$, and for the rest of rounds $txs_u$.
\label{claim:transactions:order:two}
\end{lemma}
\subsection{Decisions of an Individual Miner are Consistent}
\label{sec:without:reward}
 In this section, we show that without a high-cost reward $f_2$, once someone is successful with mining $txs_1$, the miner will continue mining this chain, as it might be too costly for the miner to lose the block reward that he already mined. As $txs_1$, $txs_2$ is the only pair of conflicting transactions in the game whenever $txs_{p_1}$ was not created, it follows from Conjecture~\ref{conj:explicit:reward} that the forks may occur only when one miner successfully mines $txs_1$, and the other player wants to profit from mining $txs_2$. Thus, in the following, we study the behavior of the players whenever one of the players decides to mine $txs_1$.
\begin{restatable}[]{thm}{continuemining}
\label{thm:continue:mining}
Assuming subgame $\game(\state, r)$ in a state $\state$ with a single chain of length $r \leq T$, formed until round $r$ where player $\player_j$  mined $txs_1$ in the last round, and $txs_{p_1}$ is not on the chain, then the player $\player_j$ will $\continue$ to mine this chain unless $f_2 - f \geq f_1+\base$, even when other miners decide to fork the chain with $txs_1$ and $\continue$ mining the new subchain created during the fork.
\end{restatable}
\begin{proof}

At every point of the game,
each player $\player_i$ can choose a strategy for the remaining $M$ rounds to collect at least $ M \lambda_i (f+\base)$ 
if he simply always chooses to mine $txs_u$ from this point.
\noindent Thus, whenever $txs_1$ was just mined by $\player_j$ and $R-r$ rounds are left till the end of the game, then for some $C$:
\begin{itemize}
\item $\player_j$ chooses a strategy $\sigma_1$ where he continues the current chain of the state $\state$, thus: 
    \begin{flalign*}
    \expectedutil_j(\sigma_1, \state, r+1) \geq \lambda_j(\reward_j(\state) +f+\base+ (R-r-1) \lambda_j (f+\base)) + C.
    \end{flalign*}
    \item when the $\player_j$ "forks" himself, then at least one of the blocks $txs_1$ or $txs_u$ will be canceled out in the final chain, therefore for any strategy $\sigma_2$ that involves forking $txs_1$:
    \begin{flalign*}
    \expectedutil_j(\sigma_2, \state, r+1) \leq \lambda_j(\reward_j(\state) - (f_1+\base)+f_2+\base+ (R-r-1) \lambda_j (f+\base)) + C.
\end{flalign*}
\end{itemize}
In conclusion 
$\expectedutil_j(\sigma_1, \state, r+1) >\expectedutil_j(\sigma_2, \state, r+1)$, 
unless $f_2 - f \geq f_1+\base$.

Now, since $\player_j$  that already mined $txs_1$ will not fork himself in the first round, it is easy to see that the same follows in the next round.

\end{proof}

\subsection{Only a High-Cost Reward May Encourage Miners to Fork}
\label{sec:high:reward}
The next result shows that it is not possible to credibly threaten with forks without high fees. In particular, we show that for any miner with mining power $\lambda_j$ that \emph{considers} mining the block $txs_1$, any forking threat in the game where $txs_{p_1}$ was not created, will not be credible unless $f_2-f \geq \lambda_j(f+\base)$, as the miner that mined the transaction will continue to mine his transaction.

\begin{restatable}[]{thm}{noforks}
\label{thm:no:forks}
Let $\game(\state, r+1)$ be a subgame in a state $\state$ that contains only a single chain of length $r$ consisting of $r-1$ unrelated transaction sets $txs_u$ and one (just mined) $txs_1$ (mutually exclusive with $txs_2$ with $\reward(txs_2) > \reward(txs_1) > \reward(txs_u)$) mined by some miner $\player_j$. The $txs_{p_1}$ was not created and $r \leq T$. Other miners will not fork $txs_1$, unless $f_2-f \geq \lambda_j(f+\base)$, where $\lambda_j$ is the mining power of the miner $\player_j$.
\end{restatable}
\begin{proof}
    First, we introduce a notation of the states which are relevant for the analysis of the above subgame.
    We say that a game is in the state $\state_{0,-1}$ 
    when $txs_1$ was just mined, and no-one forked the chain with $txs_1$ yet.
    Let's observe, that in this notation the state $\state_{0,-1}$ is equal to the initial state $\state$.
    We say that a game is in a state $\state_{k,l}$ ($k\geq0, l\geq0$), whenever the chain defined in $\state$ reaches length $k$ compared to the initial state and the chain from $\state$ is forked and the fork reaches length $l$ compared to the initial state.
    
   As $txs_1$ may be forked only if there does not exist any other chain where it can be mined (Conjecture~\ref{conj:explicit:reward}),
   at most $2$ chains may be formed in the game. Whenever $k$ or $l$ reach length $1$, one of these chains will be closed. For any player $\player_i$ that we currently consider, we say that in the game is in the state $\state_{k,l}^{i, u,d}$, if the player already mined $u$ blocks in the chain with $txs_1$ and $d$ blocks in the fork chain ($k,l$ describe length of the chains as above).
   

First, we analyze the behaviour of some other player $\player_i \neq \player_j$. $\player_i$ will be chosen with probability $\lambda_i$ as the winner of the round, and with probability $1-\lambda_i$ someone else will be selected as the winner of the round. Then for some constant $C$, the utility of the player $i$ in all strategies $\sigma_1$ that continue mining $txs_1$ in $\state$ can be described as:
\begin{flalign*}
    \expectedutil_i(\sigma_1, \state, r) = \lambda_i \expectedutil_i(\sigma_1, \state_{1,-1}^{i,1,0}, r+1) + C.
\end{flalign*}
 And the utility of the player $i$ in all strategies $\sigma_2$ that fork $txs_1$ in $\state$ can be described as:
 \begin{flalign*}\expectedutil_i(\sigma_2, \state, r) = 
 u_{i,0,-1}^{0,0}[\sigma_1]=\lambda_i \expectedutil_i(\sigma_2, \state_{0,0}^{i,0,1}, r+1) + C.\end{flalign*}

 By Theorem~\ref{thm:continue:mining}, the player $\player_j$ that mined $txs_1$, will continue to mine the chain with $txs_1$ in the next two rounds. Furthermore, as already mentioned, any miner $\player_i$ that decides to continue the chain with $txs_1$, can profit at least $\lambda_i(f+\base)$ in each round till the end of the game, by simply mining $txs_u$. From the above it follows that.
    \begin{flalign*}
        \expectedutil_i(\sigma_1, \state_{1,-1}^{i,1,0}, r+1) \geq \reward_i(\state)+ f + \base + (R-r)\lambda_i(f+\base),
    \end{flalign*}
    
 \begin{flalign*}
     \expectedutil_i(\sigma_2, \state_{0,0}^{i,0,1}, r+1) \leq \reward_i(\state) + \max\{(f_2 + \base)(1-\lambda_j) + (M-1)\lambda_i(f+\base) \\ (f + \base)(1-\lambda_j) + \lambda_i(f_2+\base)+ (R-r-1)\lambda_i(f+\base) \}.
 \end{flalign*}
    In conclusion $\expectedutil_i(\sigma_1, \state_{1,-1}^{i,1,0}, r+1) > \expectedutil_i(\sigma_2, \state_{0,0}^{i,0,1}, r+1)$
    unless $f_2-f \geq \lambda_j(f+\base)$.
\end{proof}
\subsection{Without a High-Cost Reward, All Players Mine $txs_1$}
\label{sec:mine:txsone}
As we already observed, once $txs_1$  is mined, it will not be forked unless the bribing fee is sufficient. We will show that for a sufficiently large number of rounds $T$, all of the players will mine transaction set $txs_1$ in the first round. A similar result was introduced in~\cite{timelocked}, but we prove that this result still holds in the game with forks.

\begin{restatable}[]{thm}{txsone}
Let $\game(\state,1)$ be a subgame where none of the miners decides to create $txs_{p_1}$ before the first round, and the bribing fee is not too high, i.e. $f_2-f < 10^{-2}(f+\base)$. What is more $f_1 > f$, and if we define $Y = \sum_{j=i: \lambda_j > 0.01, f_2-f <\frac{f_1-f}{\lambda_{j}}}^{|\players|} \lambda_j$, then $T,Y$ are big enough, such that $(1- 1.01(1-Y)^T) > 0$. Every miner with $\lambda_j > 0.01$ will decide to mine $f_1$ in the first round.
\label{thm:txs1}
\end{restatable}
\begin{proof}
In the game where none of the miners decides to create the transaction set $txs_{p_1}$, miners may choose to mine $txs_u$ and $txs_1$ in all rounds, or $txs_2$ only after round $T$. 
Now, since the game contains only one pair of mutually exclusive transactions $txs_1, txs_2$ with $\reward(txs_2) > \reward(txs_1)$, then by Conjecture~\ref{conj:explicit:reward} players can start to fork only when $txs_1$ appears on the blockchain.
What is more, since $f_2-f < 10^{-2}(f+\base)$, by Theorem~\ref{thm:no:forks}, whenever some player with $\lambda_j > 10^{-2}$ successfully mines $txs_1$ in a chain of length $\leq T$, none of the players will decide to fork his block.

We prove that in the above game miners with collective mining power at least $Y$ will decide to mine $txs_1$ in rounds $\{T, T-1, \ldots, 1\}$ if not mined up to this point.
Let's take any miner with $\lambda_i > 10^{-2}$ that makes a decision in round $T-k$, for $k \in \{0,\ldots, T-1\}$. As already mentioned, once he successfully mines the block $txs_1$, it will not be forked. In round $T$, whenever the block $txs_1$ was not mined, then the miners had only mined $txs_u$ so far ending up in a state $\state_T$. Then for some constant $C$, the utility of the player $\player_i$ in all strategies $\sigma_1$ that choose to mine $txs_1$ in $\state_T$, and all strategies that choose to mine $txs_u$ in $\state_T$:
\begin{flalign*}\expectedutil_i(\sigma_1, \state_T, T) \geq \lambda_i(\reward_i(\state_T)+f_1+\base+\lambda_i (R-T-1) (f+\base)) + C_{},
\end{flalign*}
\begin{flalign*}
    \expectedutil_i(\sigma_2, \state_T, T) \geq \lambda_i(\reward_i(\state_T)+f+\base+\lambda_i(f_2+\base) + (R-T-2)\lambda_i(f+\base)) + C.
\end{flalign*}
Now, $\expectedutil_i(\sigma_2, \state_T, T) < \expectedutil_i(\sigma_1, \state_T, T)$ only if $(*) f_2-f \geq \frac{f_1-f}{\lambda_i}$.  This implies that miners with collective mining power at least $Y$ will prefer to mine $txs_1$ in this round.

In round $T-k$, where $k>0$, whenever the block $txs_1$ was not mined, then the miners had only mined $txs_u$ so far, ending up in a state $\state_{T-k}$. Then for some constant $C$, the utility of the player $i$ in all strategies $\sigma_1$ that choose to mine $txs_1$ in $\state_T$, and all strategies that choose to mine $txs_u$ in $\state_T$:
\begin{flalign*}
    \expectedutil_i(\sigma_1, \state_{T-k}, T-k) =  \lambda_i(\reward_i(\state_{T-k})+f_1+\base+\lambda_i(k-1)(f+\base)+\\ \lambda_i (R-T+k-1)(f+\base)) + C,
    \end{flalign*}
    \begin{flalign*}
    \expectedutil_i(\sigma_2, \state_{T-k}, T-k) \leq  \lambda_i(\reward_i(\state_{T-k})+f+\base +\lambda_i(k-1)(f+\base)+ \lambda_i(f_2+\base)+\\(R-T+k-2)\lambda_i(f+\base)) + C.
\end{flalign*}
Similiary, the above equation implies that at least  $Y$ miners will prefer to mine $txs_1$ in this round.

Now, after the first round there are $T$ rounds till the moment of mining $txs_2$, player's $\player_i$ benefit from mining $txs_1$ (with $\lambda_i > 0.01$) and not waiting for $txs_2$ is at least $benefit = \lambda_i (f_1+\base) - \lambda_i (1-Y)^T(f_2+\base)$, and since $f_2-f < 10^{-2}(f+\base)$, then $benefit \geq \lambda_i(f_1+\base - (1-Y)^T)(1.01f+1.01B)$. Now, assuming that $f_1 > f$ we have $benefit \geq \lambda_i(f (1- 1.01(1-Y)^T)) + \base(1-1.01(1-Y)^T)$. This implies that whenever $(1- 1.01(1-Y)^T) > 0$, then all miners with $\lambda_i > 0.01$ will mine $txs_1$ in the first round.
\end{proof}

\noindent A similar results holds in any state where sufficiently large number of $T-r+1$ rounds are left till the round $T$, $txs_{p_1}$ was not created in the game (or $txs_u$ was mined in the first round), and $txs_1$ was not mined yet. We leave it as a lemma without a proof.
\begin{lemma}
Given a game with forks $\game(\state,r)$ with $r < T$, where none of the miners decides to create $txs_{p_1}$ before the first round (or $txs_u$ is mined in the first round), and the bribing fee is not too high, i.e. $f_2-f < 10^{-2}(f+\base)$ and  given $Y = \sum_{j=i: \lambda_j > 0.01, f_2-f <\frac{f_1-f}{\lambda_{j}}}^{|\players|} \lambda_j$; $T-r+1,Y$ are big enough, such that $(1- 1.01(1-Y)^{T-r+1}) > 0$, every miner with $\lambda_j > 0.01$ will decide to mine $f_1$ in this round.
\label{lemma:txs1}
\end{lemma}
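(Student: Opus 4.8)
The plan is to adapt the proof of Theorem~\ref{thm:txs1} almost verbatim, shifting the starting round from $1$ to $r$ and replacing the horizon $T$ by the remaining horizon $T-r+1$ everywhere it governs the probability that $txs_1$ escapes being sealed before the timelock expires. First I would re-establish the same structural facts that held at round $1$ in Theorem~\ref{thm:txs1}. Since $txs_{p_1}$ is not created (equivalently $txs_u$ was mined in the first round, which forecloses ever creating $txs_{p_1}$), the only pair of mutually exclusive transaction sets is $txs_1, txs_2$ with $\reward(txs_2) > \reward(txs_1) > \reward(txs_u)$. By Conjecture~\ref{conj:explicit:reward} a fork can only be initiated once $txs_1$ sits on a chain, and because $f_2 - f < 10^{-2}(f+\base)$, Theorem~\ref{thm:no:forks} guarantees that any $txs_1$ mined by a player with $\lambda_j > 10^{-2}$ on a chain of length $\leq T$ will not be forked. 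Hence, from round $r$ onward, a block $txs_1$ mined by a sufficiently strong player is permanent.

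Next I would run the same backward induction as in Theorem~\ref{thm:txs1}, but over the rounds $T, T-1, \dots, r$ rather than $T, \dots, 1$. For each such round, assuming $txs_1$ has not yet been mined, the reached state $\state_{T-k}$ is obtained by extending with $txs_u$ only, and the comparison of $\expectedutil_i(\sigma_1, \state_{T-k}, T-k)$ (mine $txs_1$) against $\expectedutil_i(\sigma_2, \state_{T-k}, T-k)$ (mine $txs_u$ and wait for $txs_2$) is identical to the one already carried out there; it yields that a player strictly prefers $txs_1$ unless $f_2 - f \geq \frac{f_1 - f}{\lambda_i}$. Consequently, in every round from $r$ to $T$ the set of players who prefer $txs_1$ has collective mining power at least $Y$, by the very definition of $Y$.

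Finally, I would bound the benefit at round $r$ of committing to $txs_1$ versus waiting for $txs_2$. The only genuine change is the exponent: starting at round $r$, there are $T-r+1$ rounds before $txs_2$ becomes mineable during which some player of collective power $\geq Y$ could seal $txs_1$, so the probability that $txs_1$ is never mined in that window is at most $(1-Y)^{T-r+1}$. Mirroring the final computation of Theorem~\ref{thm:txs1}, the benefit of mining $txs_1$ now is at least $\lambda_i\big((f_1+\base) - (1-Y)^{T-r+1}(f_2+\base)\big)$, and using $f_2 - f < 10^{-2}(f+\base)$ together with $f_1 > f$ this is at least $\lambda_i\big(f(1 - 1.01(1-Y)^{T-r+1}) + \base(1 - 1.01(1-Y)^{T-r+1})\big)$, which is positive exactly under the hypothesis $(1 - 1.01(1-Y)^{T-r+1}) > 0$. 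Therefore every miner with $\lambda_j > 0.01$ mines $txs_1$ in round $r$.

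The main obstacle I expect is purely bookkeeping: pinning down the round count $T-r+1$ (rather than $T-r$ or $T-r+2$) in the exponent, which requires careful tracking of which rounds in $\{r, \dots, T\}$ still permit $txs_1$ before the timelock on $txs_2$ lapses, and aligning the induction's base case (round $T$) and step with the case split ($k=0$ versus $k\geq 1$) used in Theorem~\ref{thm:txs1}. Everything else transfers unchanged, because the reward structure and the non-forking guarantees of Theorem~\ref{thm:no:forks} are invariant under the shift of the starting round.
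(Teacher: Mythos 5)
Your proposal is correct and takes exactly the route the paper intends: the paper states this lemma without proof, noting only that it "holds similarly" to Theorem~\ref{thm:txs1}, and your adaptation — re-running the backward induction from round $r$ instead of round $1$ and replacing the exponent $T$ by the remaining window $T-r+1$ in the probability that $txs_1$ escapes being sealed — is precisely that argument. The structural ingredients you reuse (Conjecture~\ref{conj:explicit:reward}, Theorem~\ref{thm:no:forks}, and the definition of $Y$) are invariant under the shift of starting round, so nothing further is needed.
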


\subsection{Discouraging Miners to Mine $txs_1$}
\label{sec:define:penalty}
In the previous sections, we have shown that it is rather expensive to force the players not to mine $txs_1$ in the first round, even when the players can fork this transaction. In this section, we leverage the self-penalty mechanism introduced in Section~\ref{sec:self:penalty}. The proof is inductive, and its base case starts in round $T$. For each round, we first show that the miner $\player_s$ with the highest mining power $\lambda_s$ will not mine $txs_1$, as we assume that $f_2-f > \frac{f_1-f}{\lambda_{s}}$. Next, given a sufficiently large penalty $P >\lambda_s(f+\base)$, we show that the selected player $\player_s$ will fork the transaction $txs_1$, once it appears on the blockchain, even though it poses a risk of losing the block reward. Finally, we show that in this round all players other than the player $\player_s$ are afraid to mine $txs_1$, when the self-penalty transaction is on the chain.
\begin{restatable}[]{thm}{txstwo}
Let $\game(\state,2)$ be a subgame where $txs_{p_1}$ defined by a  player $\player_s$ with mining power $\lambda_s$ was mined in the first round with $P >\lambda_s(f+\base)$. What is more $f_2-f > \frac{f_1-f}{\lambda_{s}}$, and $\frac{f+\base}{f_1+\base} > 1-\lambda_s^2$. None of the miners will decide to mine $txs_1$ in rounds $2, \ldots, T$.
\label{thm:txs2}
\end{restatable}
\begin{proof}
Let us denote with $\state_{p_1}^{T}$ a state containing a chain of length $T-1$, where $txs_{p_1}$ was created by $\player_s$ with mining power $\lambda_s$ and with the deposit parameter $P~\geq~f+\base- \lambda_s(f_2+\base)$. All other mined transaction sets in $\state_{p_1}^{T}$ are $txs_u$. With $\state_{p_1, f_1}^{T}$ we denote a state $\state_{p_1}^{T}$ with transaction set $txs_1$ added at the top of the chain in the state $\state_{p_1}^{T}$.

Let us denote with $\sigma_1$ all strategies where player $\player_s$ continues to mine the chain with $txs_1$ in $\state_{p_1, f_1}^{T}$, and let $\sigma_2$~denote all strategies, where he forks $txs_1$ in $\state_{p_1, f_1}^{T}$ and then continues to mine the newly created chain (we assume that the transaction $txs_1$ was not mined by the miner $\player_s$, as when $f_2-f > \frac{f_1-f}{\lambda_{s}}$, then as shown in the proof of the Theorem~\ref{thm:txs1}, the player $\player_s$ will choose to mine $txs_u$ in $\state_{p_1}^{T}$). It is easy to see that once the player $\player_s$ successfully forks $txs_1$ and creates a new chain, all miners other than the miner that mined $txs_1$ will prefer to mine the newly created chain. As the miner who mined the transaction $txs_1$ has the mining power less than $\lambda_s$, we can say that once $txs_1$ is forked, the miners with the mining power at least $1-\lambda_s$ will continue mining the new chain. In summary, for some constant $C$, given similar arguments as in the previous proofs:
\begin{flalign*}
    \expectedutil_s(\sigma_1, \state_{p_1, f_1}^{T}, T+1) \leq \lambda_s (\reward_s(\state_{p_1, f_1}^{T})+f+\base + (R-T-1) \lambda_s (f+\base) - P)+C,
\end{flalign*}
\begin{flalign*}\expectedutil_s(\sigma_2, \state_{p_1, f_1}^{T}, T+1) \geq \lambda_s(\reward_s(\state_{p_1, f_1}^{T})+(1-\lambda_s)(f+B)+\lambda_s(f_2+B)+\\ (R-T-2)\lambda_s (f+\base)) + C,\end{flalign*} as after two rounds, the player may at least mine $txs_u$ in each round.
This implies that for $P >\lambda_s(f+\base)$ the committed player prefers to choose strategy $\sigma_2$ and fork in $\state_{p_1, f_1}^{T}$.

\noindent Now, we consider the strategy for every player $\player_i$ such that $i \neq s$ and $\lambda_i < \lambda_s$ in state~$\state_{p_1}^{T}$.
Let us denote strategies where player $\player_i$ chooses to mine $txs_1$ in $\state_{p_1}^{T}$ with $\sigma_1$ and when he mines $txs_u$ in $\state_{p_1}^{T}$ with $\sigma_u$.
Then for some $C$ following inequalities are satisfied:
\begin{flalign}
    \nonumber
    \expectedutil_i(\sigma_u, \state_{p_1}^{T}, T) \geq \lambda_i [ \reward_i(\state_{p_1}^{T}) + (f + \base) + (R-T-2)\lambda_i*(f+\base) +\\ 
    \nonumber
    \lambda_i*(f_2+\base)] + C,
\end{flalign}
\begin{flalign*}
    \expectedutil_i(\sigma_1, \state_{p_1}^{T}, T) \leq \lambda_i[\reward_i(\state_{p_1}^{T}) + (f_1+\base)\prob^{survive}[\state_{p_1, f_1}^{T}] + (R-T-1)\lambda_i*(f+\base) ]]+ C.\end{flalign*}
Where $\prob^{survive}[\state]$ is the probability that the last block of the single chain in the state $\state$  remains a part of the chain in the last round $R$.  Since in $\state_{p_1, f_1}^{T}$ player $\player_s$ will prefer to fork the transaction set $txs_1$, then $\prob^{survive}[\state_{p_1, f_1}^{T}] \leq
 1-\lambda_s^2$. 
 
 \vspace{0.2cm} \noindent Now, whenever $1-\lambda_s^2< \dfrac{f+\base+\lambda_i(f_2+\base)+\lambda_i(f+\base)}{f_1+\base}$, then all players prefer to mine $txs_u$.
 
 \vspace{0.2cm} \noindent However~$\dfrac{f+\base+\lambda_i(f_2+\base)+\lambda_i(f+\base)}{f_1+\base} > \dfrac{f+\base}{f_1+\base}$, and $\dfrac{f+\base}{f_1+\base} > 1-\lambda_s^2$ by the assumption.
The above implies that on a chain of length $T$, every player mines $txs_u$.

\noindent For any chain of length $k = 3, \ldots, T-1$ we prove that if everyone mines only $txs_u$  on chain of length $k+1$ (i.e. strategy for every miner on $\state_{p_1}^{k+1}$ is to mine $txs_u$) or everyone forks on chain of length $k+1$ with $txs_1$ then the same assumptions hold for chains of length $k$.

\noindent First we prove that on a chain of length $k$ player $\player_s$ will fork $txs_1$, once it appears on the blockchain. We inductively assume that the assumptions hold for chains of length $k+1, \ldots, T$.

\noindent Again, once the player $\player_s$ successfully forks $txs_1$ and creates a new chain, all miners other than the miner that mined $txs_1$ will continue mining the new chain. Let us denote with $\sigma_1$ all strategies where player $\player_s$ continues to mine $txs_1$ in $\state_{p_1, f_1}^{k}$, and with $\sigma_2$ all strategies, where he forks $txs_1$ in $\state_{p_1, f_1}^{k}$. Then for some constant $C$:
\begin{flalign*}
    \expectedutil_s(\sigma_1, \state_{p_1, f_1}^{k}, k+1) \leq  \lambda_s (\reward_s(\state_{p_1, f_1}^{k})+f+\base + (T-k) \lambda_s (f+\base) - P)+C,
\end{flalign*} 
\begin{flalign*}\expectedutil_s(\sigma_2, \state_{p_1, f_1}^{k}, k+1) \geq \lambda_s(\reward_s(\state_{p_1, f_1}^{k})+(1-\lambda_s)(f+B)+\\ \lambda_s(f_2+B)+ (T-k-1)\lambda_s (f+\base)) + C.\end{flalign*} as after two rounds, the player may always mine $txs_u$ in each round. 
This implies that for $P >\lambda_s(f+\base)$ the committed player prefers to choose strategy $\sigma_2$ and fork in~$\state_{p_1, f_1}^{k}$.

\noindent Next, we analyse the optimal strategy for any player $\player_i$ such that $i \neq s$, in $\state_{p_1}^{k}$. Let us denote strategies where player $\player_i$ chooses to mine $txs_1$ in $\state_{p_1}^{k}$ with $\sigma_1$ and with $\sigma_u$ when he wants to mine $txs_u$ in $\state_{p_1}^{k}$.
It holds for some $C$, that:
\begin{flalign*}
    \expectedutil_i(\sigma_u, \state_{p_1}^{k}, k) \geq \lambda_i [\reward_i(\state_{p_1}^{k})+ (f + \base)\cdot 1 + (T-k-2)\lambda_i*(f+\base) + \lambda_i*(f_2+\base)] + C,
\end{flalign*}
\begin{flalign*}\expectedutil_i(\sigma_1, \state_{p_1}^{k}, k) \leq \lambda_i[\reward_i(\state_{p_1}^{k})+(f_1+\base)\prob^{survive}[\state_{p_1, f_1}^{k}] + (T-k-1)\lambda_i*(f+\base) ]]+ C.\end{flalign*}
As discussed previously, whenever $1-\lambda_s^2< \frac{f+\base+\lambda_i(f_2+\base)+\lambda_i(f+\base)}{f_1+\base}< \frac{f+\base}{f_1+\base}$ is satisfied, then all players prefer to mine $txs_u$.

\noindent Now, for the chain of length $2$, whenever $txs_1$ is mined it will be forked by a similar argument as discussed above, as once $txs_1$ appears on blockchain, $txs_{p_1}$ cannot be forked. 
The analysis of the following case is conducted to asses the behaviour of players in the second round of the game.

\vspace{0.2cm}\noindent\textbf{Case 1.} Player $\player_s$ mined $txs_{p_1}$ in the state $\state_{p_1}$.

\vspace{0.2cm}\noindent\textbf{Case 1.A.} When $txs_1$ was just mined by the player $\player_s$ and $R-1$ rounds are left till the end of the game, then for some $C$:
\begin{itemize}
\item by $\sigma_1$ we denote strategies where player $\player_s$ decides to continue the current chain of the state, adding $txs_1$:  
\begin{flalign*}
    \expectedutil_s(\sigma_1, \state_{p_1}, 2) \leq \max \{
    \lambda_s(\reward_s(\state_{p_1}) +f_1+\base+ (R-2) \lambda_s (f+\base)) + C,\nonumber\\
    \lambda_s(\reward_s(\state_{p_1}) + (R-3) \lambda_s (f+\base)+\lambda_s (f_2+\base)+P) + C
    \}\nonumber
    \end{flalign*}
    \item by $\sigma_2$ we denote strategies where player $\player_s$ decides to continue the current chain of the state, adding $txs_u$:  \begin{flalign*}
    \expectedutil_s(\sigma_2, \state_{p_1}, 2) \geq \lambda_s(\reward_s(\state_{p_1}) +f+\base+ (R-3) \lambda_s (f+\base) + \lambda_s (f_2+\base)+P) + C\nonumber
\end{flalign*}
    \item by $\sigma_{f_1}$ - we denote a strategy that chooses to fork $txs_{p_1}$ and mine $txs_1$ - 
    \begin{flalign*}
    \expectedutil_s(\sigma_{f_1}, \state_{p_1}, 2) \leq \max\{ \lambda_s(\reward_s(\state_{p_1}) + (f_1+\base)+(R-2) \lambda_s (f+\base)-\base) + C,\nonumber\\ 
    \lambda_s(\reward_s(\state_{p_1}) + (R-2) \lambda_s (f+\base)+\lambda_s (f_2+\base)+P-\base) + C \nonumber 
    \} 
\end{flalign*}
    \item by $\sigma_{f_u}$ - we denote a strategy that chooses to fork $txs_{p_1}$ and mine $txs_u$
    \begin{flalign*}
    \expectedutil_s(\sigma_{f_u}, \state_{p_1}, 2) \leq \max\{\lambda_s(\reward_s(\state_{p_1}) + f+\base+ (R-3)\lambda_s(f+\base)+\lambda_s(f_2+\base)-\base),\nonumber \\ 
    \lambda_s(\reward_s(\state_{p_1}) +\lambda_s(f_1+\base)+ (R-3)\lambda_s(f+\base)-\base) \nonumber
    \}
    \end{flalign*}
\end{itemize}
In summary, the strategy $\sigma_2$ is dominating and player $\player_s$ who mined $txs_{p_1}$, will prefer to continue mining $txs_{p_1}$. Note that the above analysis will be similar, even if some player other than the $\player_s$ successfully starts a fork of $txs_{p_1}$. 

\vspace{0.2cm}\noindent\textbf{Case 1.B.} Further, for any player $\player_i$, where $i \neq s$, for some constant $C$:

\begin{itemize}
\item by $\sigma_1$ we denote strategies where $\player_i$ decides to continue the current chain of the state, adding $txs_1$:  \begin{flalign*}
    \expectedutil_i(\sigma_1, \state_{p_1}, 2) \leq \max \{
    \lambda_i((f_1+\base)(1-\lambda_s)+ (R-2) \lambda_i (f+\base)) + C,\nonumber\\
    \lambda_i((R-3) \lambda_i (f+\base)+\lambda_i (f_2+\base)) + C\nonumber
    \}
    \end{flalign*}
    \item by $\sigma_2$ we denote strategies where $\player_i$ decides to continue the current chain of the state, adding $txs_u$:  \begin{flalign*}
    \expectedutil_i(\sigma_2, \state_{p_1}, 2) \geq \lambda_i(f+\base+ (R-3) \lambda_i (f+\base) + \lambda_i (f_2+\base)) + C\nonumber
\end{flalign*}
    \item by $\sigma_{f_1}$ we denote a strategy that chooses to fork $txs_{p_1}$ and mine $txs_1$ - \begin{flalign*}
    \expectedutil_i(\sigma_{f_1}, \state_{p_1}, 2) \leq \max\{ \lambda_i((f_1+\base)(1-\lambda_s)+(R-2) \lambda_i (f+\base)) + C,\nonumber\\ 
    \lambda_j( (R-2) \lambda_i (f+\base)+\lambda_i (f_2+\base)) + C\nonumber 
    \} 
\end{flalign*}
    \item by $\sigma_{f_u}$ we denote a strategy that chooses to fork $txs_{p_1}$ and mine $txs_u$
    \begin{flalign*}
    \expectedutil_i(\sigma_{f_u}, \state_{p_1}, 2) \leq \max\{\lambda_i((f+\base)(1-\lambda_s)+ (R-3)\lambda_i(f+\base)+\lambda_i(f_2+\base)) + C, \nonumber\\ 
    \lambda_i(\lambda_i(f_1+\base)+ (R-3)\lambda_i(f+\base)) + C \nonumber
    \}
    \end{flalign*}
\end{itemize}
Similarly, $\sigma_2$ is the dominating strategy for every player $\player_i \neq \player_s$, unless $(f_1+B) (1-\lambda_s) \geq (f+B)$ (which does not hold for $f_1 -f \geq 0.05$, and $\base \approx 6$).

\vspace{0.2cm}\noindent\textbf{Case 2.} Player $\player_i$ (where $i\neq s$) mined $txs_{p_1}$ in the state $\state_{p_1}$.

\vspace{0.2cm}\noindent\textbf{Case 2.A.} The analysis for the player $\player_i$, for some constant $C$:

\begin{itemize}
\item by $\sigma_1$ we denote strategies where the player $\player_i$ decides to continue the current chain of the state, adding $txs_1$:  \begin{flalign*}
    \expectedutil_i(\sigma_1, \state_{p_1}, 2) \leq \max \{
    \lambda_i(\reward_i(\state_{p_1}) +(f_1+\base)(1-\lambda_j)+ (R-2) \lambda_i (f+\base)) + C,\nonumber\\
    \lambda_i(\reward_i(\state_{p_1}) + (R-3) \lambda_i (f+\base)+\lambda_i (f_2+\base)) + C
    \}\nonumber
    \end{flalign*}
    \item by $\sigma_2$ we denote strategies where the $\player_i$ decides to continue the current chain of the state, adding $txs_u$:  \begin{flalign*}
    \expectedutil_i(\sigma_2, \state_{p_1}, 2) \geq \lambda_i(\reward_i(\state_{p_1}) +f+\base+ (R-3) \lambda_i (f+\base) + \lambda_i (f_2+\base)) + C\nonumber
\end{flalign*}
    \item by $\sigma_{f_1}$ we denote a strategy that chooses fork $txs_{p_1}$ and mine $txs_1$ - \begin{flalign*}
    \expectedutil_i(\sigma_{f_1}, \state_{p_1}, 2) \leq \max\{ \lambda_i(\reward_i(\state_{p_1}) + (f_1+\base)+(R-2) \lambda_i (f+\base)-\base) + C,\nonumber\\ 
    \lambda_j(\reward_i(\state_{p_1}) + (R-2) \lambda_i (f+\base)+\lambda_i (f_2+\base)-\base) + C\nonumber 
    \} 
\end{flalign*}
    \item by $\sigma_{f_u}$ we denote a strategy that chooses to fork $txs_{p_1}$ and mine $txs_u$
    \begin{flalign*}
    \expectedutil_i(\sigma_{f_u}, \state_{p_1}, 2) \leq \max\{\lambda_i(\reward_i(\state_{p_1}) + f+\base+ (R-3)\lambda_i(f+\base)+\lambda_i(f_2+\base)-\base)+ C, \nonumber\\ 
    \lambda_i(\reward_i(\state_{p_1}) +\lambda_i(f_1+\base)+ (R-3)\lambda_i(f+\base)-\base) + C \nonumber
    \}
    \end{flalign*}
\end{itemize}

\noindent Similarly to previous case, the strategy $\sigma_2$ dominates all other strategies. Again, the above analysis will be similar, even if some player other than the $\player_i$ successfully starts a fork of $txs_{p_1}$.\\

\vspace{0.2cm}\noindent\textbf{Case 2.B.} Next, we analyze the cases for the player $\player_{s}$ who did not mine $txs_{p_1}$. For some constant $C$:

\begin{itemize}
\item by $\sigma_1$ we denote strategies where player $\player_{s}$ decides to continue the current chain of the state, adding $txs_1$:  \begin{flalign*}
    \expectedutil_{s}(\sigma_1, \state_{p_1}, 2) \leq \max \{
    \lambda_{s}(\reward_s(\state_{p_1}+f_1+\base+ (R-2) \lambda_{s} (f+\base)) + C, \nonumber\\
    \lambda_{s}(\reward_s(\state_{p_1}+ (R-3) \lambda_{s} (f+\base)+\lambda_{s} (f_2+\base)+P) + C
    \}\nonumber
    \end{flalign*}
    \item by $\sigma_2$ we denote strategies where player $\player_{s}$ decides to continue the current chain of the state, adding $txs_u$:  \begin{flalign*}
    \expectedutil_{s}(\sigma_2, \state_{p_1}, 2) \geq \lambda_j(f+\base+ (R-3) \lambda_{s} (f+\base) + \lambda_{s} (f_2+\base)+P) + C\nonumber
\end{flalign*}
    \item by $\sigma_{f_1}$ we denote a strategy that chooses to fork $txs_{p_1}$ and mine $txs_1$: 
    \begin{flalign*}
    \expectedutil_{s}(\sigma_{f_1}, \state_{p_1}, 2) \leq \max\{ \lambda_j(\reward_s(\state_{p_1}+ (f_1+\base)(1-\lambda_i)+(R-2) \lambda_{s} (f+\base) + P) + C,\nonumber\\ 
    \lambda_{s}(\reward_s(\state_{p_1}+(R-2) \lambda_{s} (f+\base)+\lambda_{s} (f_2+\base)+P) + C 
    \} \nonumber
\end{flalign*}
    \item by $\sigma_{f_u}$ we denote a strategy that chooses fork $txs_{p_1}$ and mine $txs_u$:
    \begin{flalign*}
    \expectedutil_{s}(\sigma_{f_u}, \state_{p_1}, 2) \leq \max\{\lambda_j(\reward_s(\state_{p_1}+ (f+\base)(1-\lambda_i)+\\ (R-3)\lambda_{s}(f+\base)+\lambda_{s}(f_2+\base)+P) +C,\nonumber \\ 
    \lambda_{s}(\reward_s(\state_{p_1}+\lambda_{s}(f_1+\base)+ (R-3)\lambda_{s}(f+\base))+C
    \}\nonumber
    \end{flalign*}
\end{itemize}

\noindent The strategy $\sigma_2$ is dominating for the player $\player_s$ who did not mine $txs_{p_1}$, and he will prefer to continue mining $txs_{p_1}$ whenever $f_2-f > \frac{2\overline{f}+2(f_1-f)}{\lambda_s}$. Note that the above analysis will be similar, even if some player other than the $\player_s$ successfully starts a fork of $txs_{p_1}$. 

\vspace{0.2cm}\noindent\textbf{Case 2.C.} Finally, we analyze the case for any player $\player_{i'} \neq \player_s$ different than the player $\player_i$ who mined $txs_{p_1}$. The argument follows similarly as in the \textbf{Case 1.B.} that the strategy that chooses to continue the current chain of the state and adding $txs_u$ is dominating, given that both the player $\player_i$ and the player $\player_s$ will choose to continue mining the transaction $txs_{p_1}$, even if someone else forks it.
\end{proof}

\subsection{Encouraging the Strongest Miner to Use the Penalty Mechanism}
\label{sec:summary:result}
Finally, we observe the benefit that comes from using the penalty mechanism. First for the miner with the strongest mining power $\lambda_s$, we observe that using the self-penalty mechanism and threatening others to mine the transaction $txs_1$, once it appears on the blockchain is beneficial for him whenever $f_2 - f > \frac{2\overline{f}+2(f_1-f)}{\lambda_s}+\overline{f}$. Next, for any miner with a smaller mining power, we show that merely the fact that he is threatened to mine $txs_1$ can force them to skip mining this transaction.
\begin{restatable}[]{thm}{main}
\label{thm:main}
In the game with forks $\game$ that starts with an empty state $\state$, whenever $f_2 - f > \frac{2\overline{f}+2(f_1-f)}{\lambda_s}+\overline{f}$, $\frac{f+\base}{f_1+\base} > 1-\lambda_s^2$, $f_1 > f$, $f_{p_2} > f$, $\lambda_{min} > 0.05^{T/2}$, $f_2-f < 10^{-2}(f+\base)$ and given $Y = \sum_{j=i: \lambda_j > 0.01, f_2-f <\frac{f_1-f}{\lambda_{j}}}^{|\players|} \lambda_j$, it holds that $(1- 1.01(1-(1-Y)^{T/2}) > 0$, the dominating strategy for all players in the game $\game$ is is to mine $txs_{p_1}$ with $P >\lambda_s(f+\base)$ created in the first round by the  strongest player $\player_s$ with the mining power $\lambda_s$, then mine $txs_u$ until round $T$, then $txs_2$, $txs_{p_2}$, and $txs_u$ until the end. 
\end{restatable}
\begin{proof}
By Theorems~\ref{thm:txs1} and ~\ref{thm:txs2}, utility of the player $\player_s$ that chooses to create $txs_{p_1}, txs_{p_2}$ and mine $txs_{p_1}$ in the first round\footnote{In the analysis we omit the strategy where the player $\player_s$ creates $txs_{p_1}, txs_{p_2}$ and does not decide to mine $txs_{p_1}$} (strategy $\sigma_{p}$) is at least:
\begin{flalign*}
    \expectedutil_s(\sigma_{p}, \state, 1) 
\geq -\lambda_{p_1} c_{p_1}\overline{f_{p_1}} + \lambda_s ((m-c_{p_1})\overline{f}+\base) + (\lambda_{p_1}+\lambda_s)F'_2 + (1-\lambda_s-\lambda_{p_1})F'_1,
\end{flalign*}
where $
F'_1= (\lambda_s(T-1)(f+\base)+\lambda_s(R-T)(f+\base)),$
\begin{flalign*}
F'_2 = (\lambda_s((T-1)(f+\base)) + \lambda_s(f_2+\base) - \lambda_{p_2}c_{p_2}\overline{f}_{p_2} + \lambda_s ((m-c_{p_2})\overline{f}+\base) +\\\lambda_i(R-T-2)(f+\base)).
\end{flalign*}

\noindent Recall that we assume that all players with mining power less than $1\%$ have collective power at most $5\%$. As the players with mining power more than $0.01$ will prefer to mine $txs_1$ in the first place when $txs_{p_1}$ is not created, the utility of the player~$\player_s$ that does not decide to create $txs_{p_1}$ (strategy $\sigma_1$) is at most (by Lemma~\ref{lemma:txs1} and the Theorem~\ref{thm:txs2}):
\begin{flalign*}
    \expectedutil_s(\sigma_{1}, \state, 1) \leq ((1-0.05^{T/2})F_1+0.05^{T/2}F_2),
\end{flalign*}
where $
F_1=F'_1+\lambda_s(f_1+B)$, $
F_2=F'_2+\lambda_s(f+B)$.
Further, if $\lambda_{p_1}c_{p_1}\overline{f}_{p_1}+\lambda_s(f+\base)-\lambda_s c_{p_1}\overline{f}+F'_1+(\lambda_{p_1}+\lambda_s)(F'_2-F'_1) > F_1 + 0.05^{T/2}(F_2-F_1),$ then $\expectedutil_s(\sigma_{p}, \state, 1) > \expectedutil_s(\sigma_{1}, \state, 1).$ This condition holds whenever: 
\begin{flalign*}
(\lambda_{p_1}+\lambda_s)[\lambda_s f_2 - \lambda_s f_1] > 0.05^{T/2}[\lambda_s f_2-\lambda_s f_1]+\lambda_{p_1}c_{p_1}\overline{f}_{p_1}+\lambda_s c_{p_1}\overline{f} +\\ (\lambda_{p_1}+\lambda_s)(\lambda_{p_2}c_{p_2}\overline{f}_{p_2}+\lambda_s c_{p_2}\overline{f})+\lambda_s(f_1-f).\end{flalign*} What concludes that the following bribe is enough to encourage the strong miner to wait for $txs_2$:
\begin{flalign*}
f_2-f > \frac{c_{p_1}\overline{f}_{p_1}+c_{p_1}\overline{f}+c_{p_2}\overline{f}_{p_2}+f_1-f}{\lambda_s-0.05^{T/2}}+c_{p_2}\overline{f}.
\end{flalign*}
 
\noindent Now, if the $txs_{p_1}, txs_{p_2}$ are created then every player $\player_i$ other than the player $\player_s$ may mine $txs_{p_1}$, once it is published (strategy $\sigma_{p_1^*}$). When $txs_{p_1}$ is successfully mined in the first round, then all miners will be encouraged to wait until $txs_2$ may be mined after the $T$'th round. $1-\lambda_{p_1}-\lambda_i$ miners may decide to mine $txs_u$ (or $txs_1$) in the first round. In this case, when the $txs_u$ is mined, all other players will be able to mine at least $f+\base$ for the rest of the rounds $\expectedutil_i(\sigma_{p_1^*}, \state, 1) \geq \lambda_i(f_{p_1}+\base)+ (
 \lambda_{p_1}+\lambda_i)F_2 + (1-\lambda_{p_1}-\lambda_i)F$,
where $F_2 = (T-1)\lambda_i(f+B)+\lambda_i(f_2+B)+\lambda_i(f_{p_2}+B)+(R-T-2)\lambda_i(f+B)$ and $F = (R-1)\lambda_i(f+B)$.
 
 On other hand the players may first decide to mine either $txs_{u}$ or $txs_1$ in the first round (strategy $\sigma_{1*}$).
 In the worst case scenario the block with $txs_1$ is not forked. What is more, 
 whenever $1-\lambda_{p_1}$ miners decide to mine $txs_u$ in the first round, then all miners with mining power more than $0.01$ will make an attempt to mine $txs_1$. In conclusion, by Lemma~\ref{lemma:txs1} and the Theorem~\ref{thm:txs2}: 
 \begin{flalign*}
     \expectedutil_i(\sigma_{1*}, \state, 1) \leq \lambda_i(f_{1}+\base)+  \lambda_{p_1}F_2+(1-\lambda_{p_1}-\lambda_1)((1-0.05^{T/2})F+0.05^{T/2}F_2 ),
 \end{flalign*}
where $F$ and $F_2$ are defined as previously.
$\expectedutil_i(\sigma_{p_1*}, \state, 1) > \expectedutil_i(\sigma_{1*}, \state, 1)$ holds whenever:
 \begin{flalign*}\lambda_i(f_{p_1}+B)+\lambda_i (F_2-F) > \lambda_i(f_{1}+B)+(1-\lambda_{p_1}) 0.05^{T/2} (F_2-F)
 \end{flalign*}
Which holds for any $f_{p_1} \geq f_1$ and $\lambda_i > 0.05^{T/2}$.

\noindent Now, by setting $c_{p_1} = 1, c_{p_2} = 1$, $\overline{f_{p_1}}=f_1-f$, $\overline{f_{p_2}} = \overline{f}$, we get a condition $f_2-f > \frac{2\overline{f}+2(f_1-f)}{\lambda_s-0.05^{T/2}}+\overline{f}$,
what for $\lambda_s \approx 20\%$ and sufficiently large $T/2$ concludes
$f_2 - f \gtrsim
 \frac{2\overline{f}+2(f_1-f)}{\lambda_s}+\overline{f}$.
\end{proof}
\section{Example Evaluation}
Using the real-world data analysis of Bitcoin fees and hashpower distribution in major PoW blockchains (see Appendix~\ref{sec:data}), we visualize the improvement our bound $f_2 - f > \frac{2\overline{f}+2(f_1-f)}{\lambda_s}+\overline{f}$ brings compared to the previous result from~\cite{zeta}, namely $f_2 - f \geq \frac{f_1-f}{\lambda_{min}}$. 
Additionally, the Theorem~\ref{thm:main} requires that  $f_2-f < 10^{-2}(f+\base)$ and there exists a player $\player_j$ with mining power $\lambda_j > 0.01$ for which $f_2-f < \frac{f_1-f}{\lambda_j}$.

For example, let us assume that $f_1 - f \approx \overline{f}$, and set $T > 110$. Now, since $\lambda_{min}$ can be fairly estimated to be $\lambda_{min} < 10^{-12}$, we can see that the attack without forking threats could cost in practice around $10^{12} \overline{f}$. On the other hand, the new bound requires only $f_2 - f > \frac{2\overline{f}+2(f_1-f)}{\lambda_s}+\overline{f}$, for $\lambda_s \approx 0.2$, this costs around $f_2 - f >  21 \overline{f}$. The only condition left is that for some miner with $\lambda_j > 0.01$, the following condition must hold $f_2-f<\frac{f_1-f}{\lambda_{j}}$, but the data shows that miners that control approximately $1.5 \% - 2 \%$ of the total mining power usually exist, thus for a miner with mining power $1\% < \lambda_j < 2\%$ it holds that $f_2-f< 50 \cdot \overline{f}$. In summary, if we take any $f_2$ that is larger than $f$ by $21$ up to $50$ times, then the default strategy for all miners is to wait for the bribing transaction.

\section{Related Work and Countermeasures}
In the landscape of constructing financially stable systems on blockchain~\cite{miller2017zerocollateral,financiallystable2}, our work falls into the class of incentive manipulation attacks which have been widely applied to undermine blockchain's security assumptions~\cite{mirkin2020}.
To the best of our knowledge, we are the first to combine  feather forking attacks~\cite{perspectives} with timelock bribing attacks on payment channel networks and to achieve a bribing cost that is approximately only constant times larger than the cost of an average transaction fee. 

Incentive manipulation attacks on timelocked puzzles were introduced with the so-called timelock bribing attack~\cite{timelocked}. Later, Avarikioti et al.~\cite{zeta} applied timelock bribing attacks in payment channel networks, such as the Lightning Network and Duplex Micropayment Channels, and proposed countermeasures. 
Our work extends \cite{zeta}, modifying the timelock bribing attack for payment channels to facilitate a miner bribing strategy that incorporates feather forking. As a result, our work reduces the cost of bribing attacks significantly in comparison to \cite{zeta}, i.e., the cost is now inversely proportional to the mining power of the largest miner instead of the smallest miner which results in at least 1000 times smaller bribes. Our model is similar to the one in~\cite{karakostas2024blockchain} that introduced forks, but we were able to craft reasonable assumptions for the PoW blockchains which secured a \emph{unique} NE solution. In particular, we restrict the strategy of the miner by forbidding him to conduct shallow forks and allowing him to fork only in a case when the player strictly profits from conducting the fork action (compare Sections~\ref{sec:threat}, \ref{sec:shallow:forks}).

The bribing strategies for the payment channels are similar in their nature to the bribing strategies for the HTLC mechanism.
Perhaps the closest to our work is~\cite{temporary}, where the authors introduced a way to bribe HTLCs, leveraging the power of smart contracts and feather forking. The cost of the attack in~\cite{temporary} is, however, proportional to the sum of the fees ($\gtrsim \sum_{i=1}^{T} f\cdot \lambda_{max}$) of all blocks before the deadline $T$. In contrast, we achieve a cost proportional to the cost of fees of a single block ($\gtrsim \frac{f_1-f}{\lambda_S}$).

Furthermore, MAD-HTLC~\cite{madhtlc} underlined the vulnerability of HTLCs to bribing attacks, achieving the same attack cost as \cite{zeta}, specifically $\approx \frac{f_1-f}{\lambda_{min}}$. MAD-HTLC presupposes that the minimum fraction of mining power controlled by a single user, $\lambda_{min}$, is at least $0.01$, to achieve low bribing costs. This is, however, an impractical assumption, as the data in Section~\ref{sec:experimental} show that $\lambda_{min}$ can be reasonably estimated to be less than $10^{-12}$, making the bribing attack exceedingly expensive. The reduction of the bribing costs \attack achieves in comparison to MAD-HTLC is similar to that of \cite{zeta} analyzed above.

MAD-HTLC additionally proposed a countermeasure for bribing attacks where miners are allowed to claim the locked coins in the HLTC in case a party misbehaves, similar to \cite{zeta}. Later, He-HTLC~\cite{helium} pointed out that MAD-HTLC is susceptible to counter-bribing attacks. 
In particular, one party may (proactively) collude with the miners to cooperatively steal the coins of the counterparty in the MAD-HTLC construction. 
He-HTLC also proposed a modification on MAD-HTLC to mitigate the counter-bribing attack: now the coins are partially burned in case of fraud instead of being fully awarded to the miners.
Recently, Rapidash~\cite{rapidash} revisited the counter-bribing attack and proposed yet another improvement on He-HTLC. 
These works are orthogonal to ours as the proposed attacks apply only to the specific MAD-HTLC construction and not to Lighting Channels that are the focus of this work.
Furthermore, our focus is not on designing countermeasures against timelocked bribing attacks. 
Instead, we demonstrate how employing feather forking can make timelocked bribing attacks very cheap for the attacker, therefore highlighting the need for robust mitigating strategies.  


Nonetheless, it is crucial to acknowledge that the previously mentioned countermeasures can be used to defend against \attack -- inheriting their respective vulnerabilities. For example, one can employ the mitigation technique for timelocked bribing on the Lighting Network proposed by Avarikioti et al.~\cite{zeta}. In our model, this countermeasure ensures that announcing $txs_2$ also involves revealing a secret that anyone can use to claim the money before time $T$. This implies that if $txs_2$ is announced in the mining pool before time $T$, all the money to be collected only after time $T$ can be immediately claimed by another party.
We assert, without proof, that the same countermeasure mechanism remains effective even in a model that considers forks. Intuitively, the ``strong'' miner in our analysis does not benefit from waiting for the bribing transaction if it is not announced, thus preventing the creation of the self-penalty transaction. Conversely, if the transaction is announced and the secret is revealed, any (winning) miner could claim the reward.

\section{Experimental Analysis}\label{sec:data}
In this section, we conduct the analysis of the fee structures on the real-world data. 

\subsection{Fee Structure}\label{sec:fee}
We analyze the real-world data on fees earned by miners on Bitcoin. In particular, we explore the Bitcoin historical data from 2022 (\cite{btcexplorer}). In this period, the block reward was the sum of the block reward base, 6.25 BTC, and the fees for including transactions. The weekly average sum of the block fees ranged between 0.05 and 0.225 BTC as shown in Figure~\ref{fig:avfee}. The average fee of a single transaction fluctuated proportionally between $0.00004$ and $0.00016$ BTC as shown in Figure~\ref{fig:avtxfee}.
We can thus fairly assume that any block typically contains on average $1500$ transactions, the sum of fees of an average block of transactions is around $0.15$ BTC and the average fee of a single transaction is approximately $10^{-4}$ BTC.

\begin{figure}
\centering
\includegraphics[width=\textwidth]
{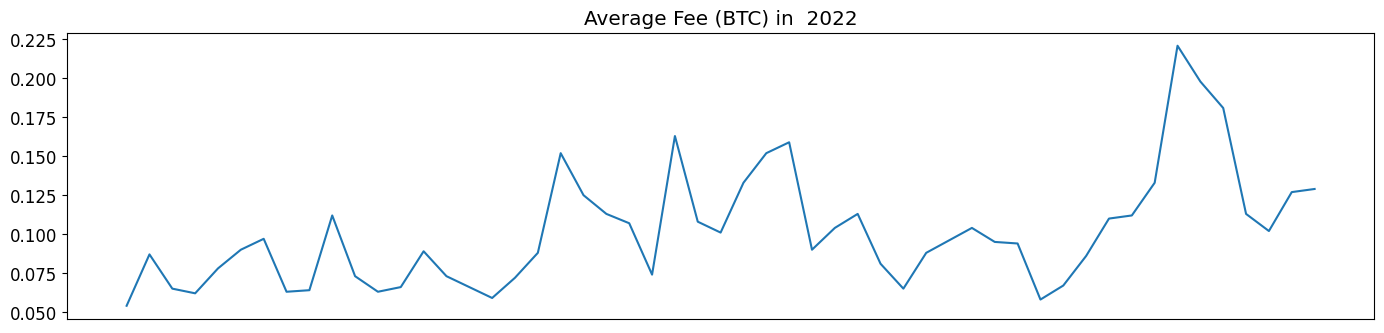}
\caption{Average amount of fees aggregated in a block for each week in 2022 (\cite{btcexplorer})}
\label{fig:avfee}
\end{figure}

\begin{figure}
\centering
\includegraphics[width=\textwidth]
{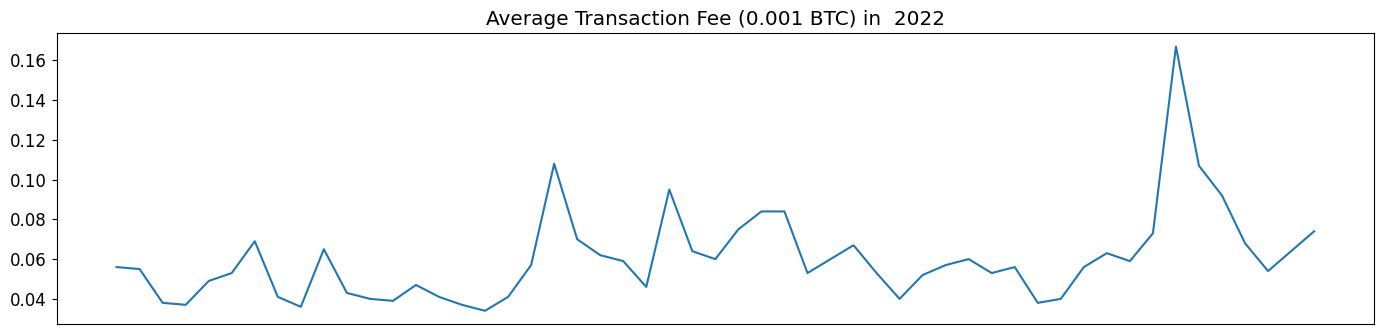}
\caption{Average transaction fee of a single transaction for each week in 2022 (\cite{btcexplorer})}
\label{fig:avtxfee}
\end{figure}

\subsection{Hash Power Analysis}\label{sec:experimental}
\begin{figure}[htb!]
    \centering
\makebox[\textwidth][c]{\includegraphics[width = 16cm]{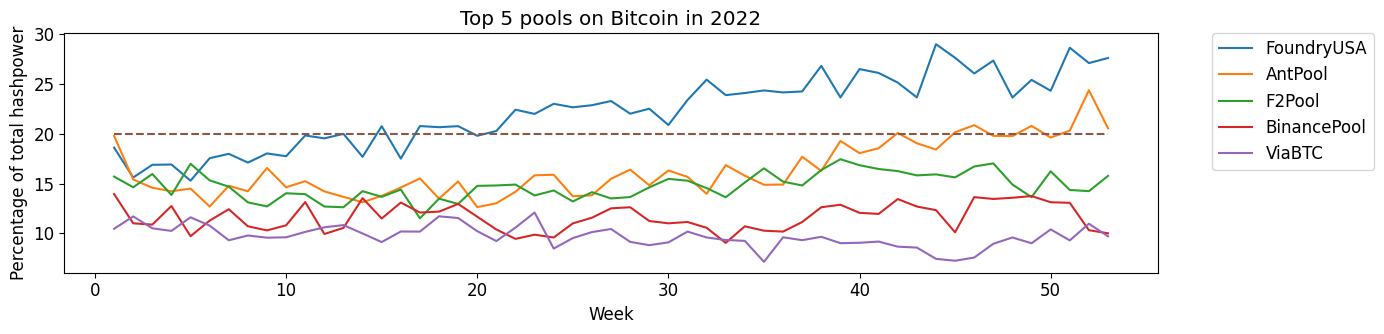}}

\caption{Average mining hashrate week by week of the most significant Bitcoin miners in 2022~\cite{btcexplorer}}
    \label{fig:average:hashrate}
\end{figure}
In this section we present the analysis of hashpower distribution of mining pools in selected popular Proof-of-Work blockchains: Bitcoin, Litecoin and Ethereum Classic. Even though all three blockchains are built upon the PoW concept, they differ in the cryptographic primitive (hash function) that they use: Bitcoin uses SHA-256, Litecoin uses Scrypt and Ethereum Classic uses ETChash. As the speed of Bitcoin mining relies solely on the amount of hashes that a device is able to compute in a time unit, miners developed a market for Application-Specific Integrated Circuits (so-called ASICs) that are designed to calculate hashes specifically for the purpose of Bitcoin mining. This can be considered a rather undesirable phenomenon, as it causes significant environmental losses, thus both Scrypt and ETCHash were designed to lower the advantage of mining blocks with ASICs. Both Scrypt and ETCHash, in contrast to SHA-256, require hardware that has relatively fast access to large amounts of memory, that makes current Bitcoin ASICs unusable in the context of Scrypt and ETCHash mining.

\begin{table}[]
\centering
\begin{tabular}{|l|l|}
\hline
\textbf{Device}             & \textbf{Hashrate (Hashes/second)}                    \\ \hline
Nvidia GeForce RTX 4090     & $2.75 \cdot 10^8$      \\ \hline
Nvidia GeForce RTX 3090     & $9.4 \cdot 10^7$      \\ \hline
Bitmain Antminer S19 XP Hyd & $2.55 \cdot 10^{14}$ \\ \hline
Baikal BK-G28               & $2.8 \cdot 10^{10}$  \\ \hline
\end{tabular}
\caption{Hash rates of commonly used Bitcoin mining devices~\cite{asicminer}}
\label{tab:mining:devices}
\end{table}

\subsubsection{The weakest miner} In the previous work~\cite{zeta}, the authors assumed that the fraction of the mining power held by the weakest miners is approximately $10^{-3}$. The data depicted in the Table~\ref{tab:mining:devices} and in the Figure $2(b)$ shows that it is a highly unrealistic assumption for Bitcoin. A very strong GPU which is often used by individual miners for mining has a hashrate of around $3\cdot 10^8$ (Table~\ref{tab:mining:devices}), whereas the collective mining power in Bitcoin can be estimated to be $2 \cdot 10^{20}$ (Figure $2(b)$). In conclusion, we cannot expect that every individual miner has more than $10^{-12}$ probability of mining a block. 

\subsubsection{Fraction of the weak miners} In this work we assume that there exists a single player that has mining power slightly more than $1\%$, but the collective mining power of weak miners that have mining power less than some $1\%$ is not more than $5\%$. This phenomenon is quite realistic, as observed in Table~\ref{table:pools}.

\subsubsection{Strong miners} On the other side of the spectrum, top mining pools continuously maintain over $20\%$ of total hashpower. To calculate the share of the hashpower of a big mining pool in a given period, we count the number of the blocks mined by it. Then we divide this number by the total number of mined blocks in the period.  Figure~\ref{fig:collective}(a) and Figure~\ref{fig:average:hashrate} show that, in the most popular blockchains, most of the time, one of the miners (mining pools) has the collective power higher than $20\%$.

    

\begin{table}[]
\centering
\begin{tabular}{|l|l|l|}
\hline
\textbf{Pool}             & \textbf{Number of mined blocks}   & \textbf{Percentage of total mined blocks}                 \\ \hline

Foundry USA    & $11851$   &  $22.281\%$  \\ \hline
AntPool    & $8670$   &  $16.301\%$  \\ \hline
F2Pool    & $7848$   &  $14.755\%$  \\ \hline
Binance Pool    & $6164$   &  $11.589\%$  \\ \hline

ViaBTC    & $5162$   &  $9.705\%$  \\ \hline
Poolin    & $4438$   &  $8.344\%$  \\ \hline
BTC.com    & $2252$   &  $4.234\%$  \\ \hline
SlushPool    & $2142$   &  $4.027\%$  \\ \hline
Luxor    & $1362$   &  $2.561\%$  \\ \hline

SBI Crypto    & $963$   &  $1.811\%$  \\ \hline
Braiins Pool    & $715$   &  $1.344\%$  \\ \hline
unknown    & $698$   &  $1.312\%$  \\ \hline
MARA Pool    & $307$   &  $0.577\%$  \\ \hline
Terra Pool    & $149$   &  $0.280\%$  \\ \hline

ULTIMUS POOL     & $147$   &  $0.276\%$  \\ \hline
KuCoinPool    & $101$   &  $0.190\%$  \\ \hline
EMCDPool   & $80$   &  $0.150\%$  \\ \hline

PEGA Pool     & $76$   &  $0.143\%$  \\ \hline
\end{tabular}
\caption{The scale of mining hashrates of the most significant mining pools and independent miners of Bitcoin in 2022. Data sources:~\cite{btcexplorer}}
\label{table:pools}
\end{table}

\begin{figure}
\centering
\begin{subfigure}{.5\textwidth}
  \centering
  \includegraphics[width=\textwidth]
  {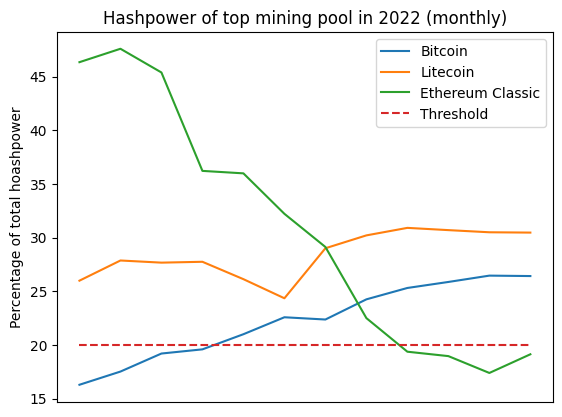}
  \caption{}
  \label{fig:sub1}
\end{subfigure}%
\begin{subfigure}{.5\textwidth}
  \centering
  \includegraphics[width=\textwidth]
  {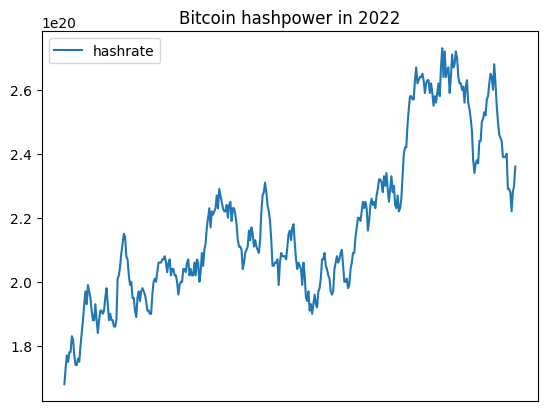}
  \caption{}
  \label{fig:sub2}
\end{subfigure}
\caption{\textbf{(a)} Collective power of top mining pools in 2022 for selected Blockchains, monthly data~\cite{btcexplorer},~\cite{ltc},~\cite{etc}. \textbf{(b)} Collective number of hashes per second calculated by all miners in 2022 [$10^{20}$ hashes/second], weekly data~\cite{hashpower}}
\label{fig:collective}
\end{figure}

\section{Conclusions and Future Work}
In conclusion, our work reexamines the vulnerability of PCNs to bribing attacks and introduces a modified attack leveraging the threat of forking. We introduce a formal model of a mining game with forking extending the conditionally timelocked games introduced by Avarikioti et al.~\cite{zeta}. In particular, in our extended model, miners not only choose which transactions to mine in each round but also decide whether to continue existing chains or initiate forks. In this model, we demonstrate that the cost of the bribing attack can be significantly reduced compared to the previous analysis. In more detail, it can be reduced from $\frac{f_1-f}{\lambda_{min}}$ to approximately $\frac{2\overline{f}+2(f_1-f)}{\lambda_s}$, where $\overline{f}$ represents the cost of an average fee for a single transaction and $\lambda_s$ denotes the reduction factor compared to significantly smaller $\lambda_{min}$ calculated in previous work~\cite{zeta}. To validate our findings, we empirically analyze the historical data of real-world blockchain implementations. This analysis confirms that staging a bribing attack on a PCN is significantly less costly (approximately 125\$) than considered previously. 

The results of our study have implications for the design and implementation of PCNs, as well as for the broader applications of timelocked contracts, e.g., atomic swaps. Our findings underscore the need for proactive measures to mitigate the risk of bribing attacks. 

Possible avenues for future research include exploring whether our penalty mechanism implementation can be implemented without the honest majority assumption or whether our results still hold in the presence of more general abandon rules. Another interesting question is whether our results extend in a Proof-of-Stake setting.

\section*{Acknowledgments}
We thank Paul Harrenstein for his help in defining the model presented in this work

This work was supported by the Austrian Science Fund (FWF) through the SFB SpyCode project F8512-N, the project CoRaF (grant agreement ESP 68-N), and by the WWTF through the project 10.47379/ICT22045.

This result is part of a project that received funding from the European Research Council (ERC) under the European Union's \emph{Horizon 2020} and \emph{Horizon Europe} research and innovation programs (grants PROCONTRA-885666 and CRYPTOLAYER-101044770). This work was also partly supported by the National Science Centre, Poland, under research project No.~46339.

\section{Notation Summary}\label{app:notation}
The Table~\ref{tab:notation}
 contains summary of the notation used in the proofs of the theorems.
\begin{table}[h!]
\begin{tabular}{|c|l|}
\hline
\textbf{Symbol}                                               & \textbf{Meaning}                                                                                                                                                       \\ \hline
$\players = \{\player_1, \ldots, \player_{|\players|}\}$                                        & Set of players (miners).                                                                                                                        \\ \hline
$\boldsymbol{\lambda} = (\lambda_1, \ldots, \lambda_{|\players|})$       & Tuple of mining powers of the players.     
\\ \hline
$\game$ & Conditionally timelocked game with forks. \\ \hline
$\state = \{S_1, \ldots, S_{|\state|}\}$                        & Current state in the game consisting of a set of chains.                                                                                                                          \\ \hline

$R,r$ & Number of rounds in the game, current round in the game. \\ \hline
$tx_i$                                                 & Single transaction that may be included in a transactions set.                                                                                    \\ \hline
$txs_i$                                                 & Transactions set that may be mined to create a single block.                                                                          \\ \hline
$\base, P$                                                       & The base of a transaction reward, the punishment added to \\&(collected from) a block.                                                                                                                                      \\ \hline
$f, \overline{f}$                                             & Fees of an average transaction set, fee of an average transaction.                                                                                                     \\ \hline
$f_i, \overline{f}_i$                                         & Fees of a transaction set with a transaction $tx_i$, \\&fee of the transaction named $i$.                                                                                       \\ \hline

$tx_u,tx_1,tx_2,tx_b,tx_{p_1},tx_{p_2}$                                         & Transactions used to construct $\attack$ \\ &(see the paragraph "Types of transaction sets.").  \\ \hline
$txs_u,txs_1,txs_2,txs_{p_1},txs_{p_2}$                                         & Transaction sets that include transactions \\& used to construct $\attack$ (see the paragraph \\ &"Types of transaction sets.").  \\ \hline

$\reward(txs_i)$                                              & Total reward from a transaction set.                                                                                                                                   \\ \hline
$\reward_i(\state)$                                           & The reward collected by a player $i$ in the state $\state$.                                                                                                             \\ \hline
$\sigma = (\sigma_1, \ldots, \sigma_n)$                       & The strategy of the players.                                                                                                                                           \\ \hline
$\expectedutil_i(\sigma), \expectedutil_i(\sigma, \state, r)$ & Utility of a player $i$, when strategy $\sigma$ is played,  utility of the \\&player $i$ in a subgame starting in state $\state$, round $r$ and strategy $\sigma$. \\ \hline
$C$                                                           & A constant describing the utility of players not considered\\& in current analysis (see the paragraph "About the proofs.").                                        \\ \hline
\end{tabular}
\caption{Notation summary}\label{tab:notation}
\end{table}

\bibliography{main}

\end{document}